\DeclareMathOperator*{\argmax}{\arg\!\max}
\begin{document}

\title{\LARGE \bf
Polynomial Time Algorithms for Constructing Optimal  Binary AIFV-$2$ Codes \thanks{This is an expanded version of work that originally appeared in \cite{golin2019polynomial}.}
}
\author{Mordecai  Golin$^{1}$ and Elfarouk Harb$^{2}$
\thanks{$^{1}$ Department of Computer Science \& Engineering, 
HKUST,  (golin@cs.ust.hk). Research partially supported by Hong Kong RGC GRF Grant 16213318.}
\thanks{$^{2}$ Departments of Mathematics and  Computer Science \& Engineering,  
   HKUST, (eyfmharb@connect.ust.hk)} }

\newdimen{\algindent}
\setlength\algindent{1.5em}          

\newtheorem{thm}{Theorem}
\newtheorem{lem}[thm]{Lemma}
\newtheorem{cor}[thm]{Corollary}
\newtheorem{defn}{Definition}
\newtheorem{example}{Example}
\newcommand{\q}[1]{``#1''} 

\newenvironment{Note}
{\em \small
Note: 
}

\newcommand{\Cost}{\mathrm{Cost}}
\newcommand{\Trunc}{\mathrm{Trunc}}
\newcommand{\sig}{\mathrm{sig}}
\newcommand{\depth}{\mathrm{depth}}
\newcommand{\Expand}{\mathrm{Expand}}
\newcommand{\OPT}{\mathrm{OPT}}
\newcommand{\Pred}{\mathrm{Pred}}
\newcommand{\CALI}{{\cal I}}
\newcommand{\bfx}{{\bf x}}
\newcommand{\bfz}{{\bf z}}
\newcommand{\bfy}{{\bf y}}
\newcommand{\bfq}{{\bf q}}
\newcommand{\bfp}{{\bf p}}
\tikzstyle{VertexStyle} = [shape=ellipse,minimum width= 6ex,draw]
\tikzstyle{EdgeStyle}   = [->,>=stealth']

\maketitle

\begin{abstract}
Huffman Codes are {\em optimal}  Instantaneous Fixed-to-Variable (FV) codes  in which  every source symbol can only be encoded by one codeword. Relaxing these constraints permits constructing better FV codes.
More specifically, recent work has shown that AIFV-$m$ codes can beat Huffman coding.  AIFV-$m$ codes construct am $m$-tuple of different coding trees between which the  code alternates  and  are only {\em almost instantaneous} (AI).  This means that decoding  a word might require a delay of a finite number of bits.
 
Current algorithms for constructing optimal AIFV-$m$ codes are iterative processes that construct progressively ``better sets'' of code trees.    
The  processes have been proven to finitely converge to the optimal code but with no known bounds  on the convergence rate.

This paper derives  a geometric interpretation  of the space of binary AIFV-$2$ codes, permitting the development of the first  polynomially time-bounded iterative procedures for constructing optimal AIFV   codes.
 
We first show that a simple binary search procedure can replace the current iterative process to construct  optimal binary AIFV-$2$ codes. We then 
describe how to frame the problem as  a linear programming with an exponential number of constraints but a polynomial-time  separability oracle.  This permits using the 
Gr{\"o}tschel,  Lov{\'a}sz and  Schrijver
ellipsoid method to solve the problem in a polynomial number of steps. While more complicated, this second method has the potential to lead to a polynomial time algorithm to construct optimal AIFV-$m$ codes for general $m$.

\medskip

Keywords:  AIFV Codes, AIFV-$m$ Codes. Linear Programming.  Ellipsoid Methods
\end{abstract}



\section{INTRODUCTION}

Although it is often loosely claimed  that Huffman coding  is {\em optimal} this is  only true within strictly defined conditions.  Huffman coding 
achieves optimal average code length among  all fixed to variable (FV) codes for stationary memoryless sources
 when only a single code tree is used. In \cite{original}, {\em $K$-ary Almost Instantaneous FV ($m$-AIFV)} codes were  introduced and shown to attain  better average code lengths than Huffman Coding in many cases.  It did this by using multiple code trees as well as dropping the Huffman coding assumption of instantaneous decoding and instead permitting  a bounded decoding delay.

An  AIFV code is a tuple of code trees with an associated encoding/decoding procedure (to be described in more detail in \Cref{sec:Intro}).

$K$-ary AIFV codes  used a $K$ character encoding alphabet;
for $K>2$, the procedure used $K-1$ coding trees and had  a coding delay of $1$ bit.  For $K=2$ it used 2 trees and had a coding delay of $2$ bits.

\cite{introduceMAIFV} later introduced {\em Binary} AIFV-$m$ codes.  These were binary codes with $m$ coding trees and  decoding delay of at most $m$ bits.  The Binary AIFV-$2$ codes of \cite{introduceMAIFV}  are identical to the $2$-ary AIFV codes of  \cite{original}. \cite{introduceMAIFV} proved that  Binary AIFV-$m$ codes have redundancy of at most $1/m.$

Constructing optimal AIFV codes is much more difficult than constructing  Huffman codes. 
\cite{yamamoto2015almost} described a  general iterative  technique. 
 \cite{iterativealgomaifv}  gave a finite  iterative procedure for constructing optimal 
{Binary} AIFV-$m$ codes and proved its correctness for  $m=2,3,4,5.$   The same algorithm was  later proven correct for $m>5$ by
\cite{iterativeisoptimal}.

The algorithm works by constantly refining  a tuple of  {\em cost parameters}  $C$ (corresponding to coding penalties).
 Each iterative step solves an Integer Linear Programming (ILP) problem that constructs the optimal code trees associated with  the current  values of $C$.  $C$  is  then updated to a new uple and the process is   repeated.   
 \cite{yamamoto2015almost}
  proved  that 
    this 
 converges in a finite number of steps to a fixed point parameter  $C^{*}$ and that 
 $C^{*}$'s  associated $m$-tuple of trees is an optimal binary AIFV-$m$ code.
 This procedure had two weaknesses.  The first was that it required solving NP-Hard ILPs at every step.  The second was that no bound was known on the number of steps required for the procedure to converge.
 
A  later paper,  \cite{dp}, replaced the ILP step for $m=2$  with 
 a  $O(n^5)$ time Dynamic Programming (DP) approach based on  techniques originally developed for constructing constrained prefix-free coding trees in  \cite{golin,chan2000dynamic}. 
  \cite{iterativealgomaifv} extended this approach to $m >2$ by replacing the ILPs there with DPs running in  $O(mn^{2m+1})$ time (under certain implicit assumptions).
 For all $m$, the state of the art was that the algorithms still required an unknown number of steps to converge.

The main result of this paper is two new iterative procedures for constructing binary AIFV-$2$ codes. Each iterative step of the new procedure still finds a pair tuple of code trees associated with  $C.$ 
 The main difference is that our new procedures are guaranteed to terminate in a polynomial number of steps, i.e., $O(b),$ where  $b$ is the maximum number of bits used to encode any one input probability $p_i.$
 
Our first procedure is a simple binary search.
Our second procedure follows from the observation that the problem can be recast as a Linear Program with exponentially many constraints but which has a linear separability oracle.  This permits solving it using    Gr{\"o}tschel,  Lov{\'a}sz and Schrijver's 
ellipsoid method \cite{ellipsoid}.  Our first procedure is specific to the $m=2$ case.  The second has the potential to be expanded to work for $m >2$ as well.

\begin{figure*}[h!]
 \center
\includegraphics[width=4.5in]{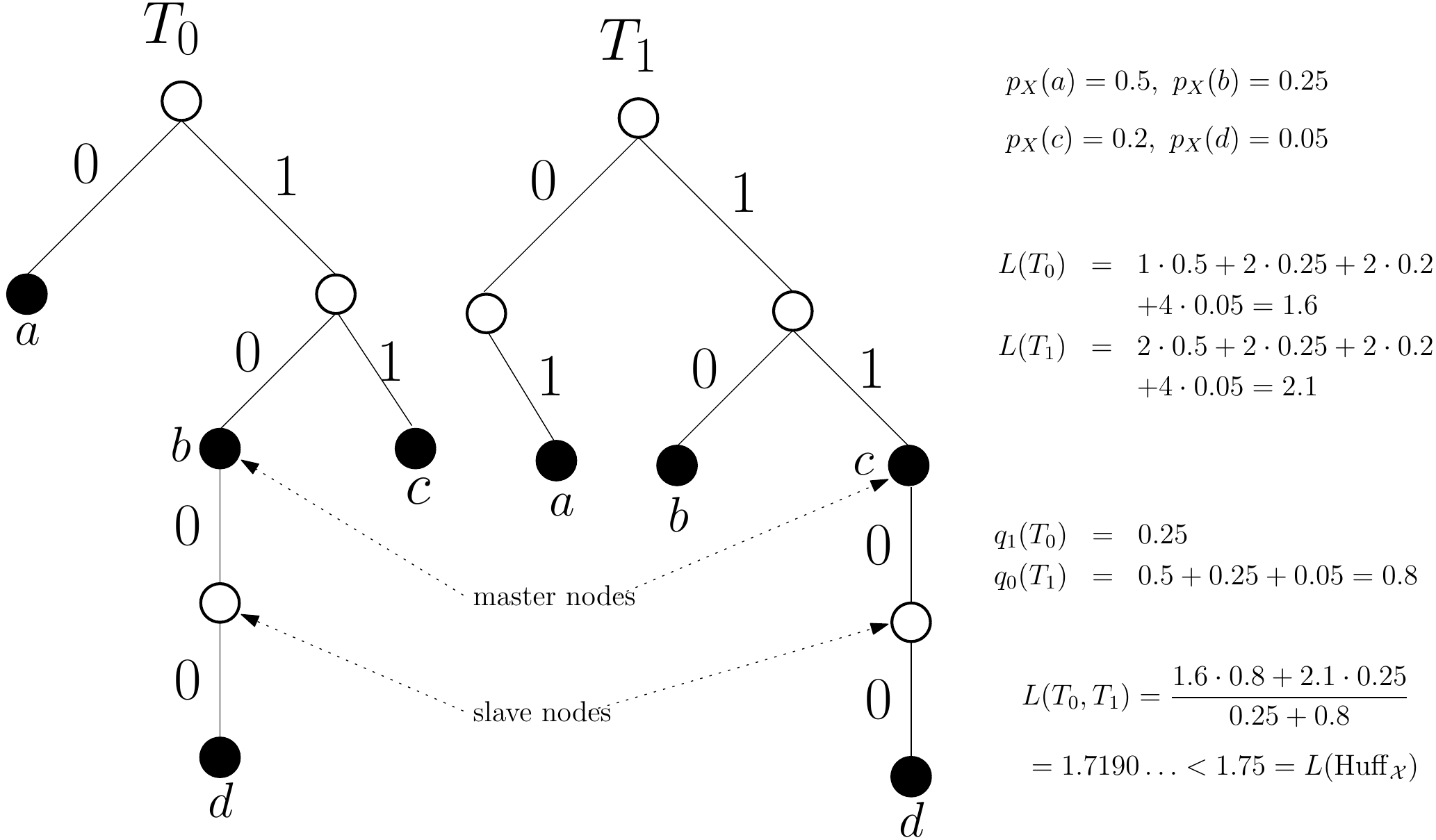}
  \caption{A binary AIFV-$2$ code for  $\mathcal{X}=\{a,b,c,d\}$ with associated probabilities. The encoding of  ${\bf b}d {\bf b}c {\bf a}a$ is
  ${\bf Y} = {\bf 10}1100 {\bf 10} 11 {\bf 01}0 .$ Note that $d,c$ and the first $a$  were encoded using $T_1$ while  the other letters were encoded using $T_0.$   $L({\mathrm{Huff}}_{\mathcal X})$ is the cost of the Huffman code for the same distribution.
  }
  \label{fig:BBB}\end{figure*}

Section \ref{sec:Intro}  introduces the previous work done on AIFV-$2$ codes. Section \ref{sec:LE and Geo}  provides a geometric reinterpretation  of \cite{yamamoto2015almost}'s iterative procedure for $m=2$
  as finding the intersection of two lower envelopes of lines.
Section \ref{sec: new algs}   then  uses this new interpretation to propose two  $O(n^5b)$ time algorithms for solving the same problem.  The first is essentially a simple binary search.  The second is a more complicated  ellipsoid-algorithm  separation-oracle based solution. 
  Section \ref{sec:Conc} briefly describes how this second technique might be extendable  to   $m >2$.

\section{Preliminaries and Previous Work on Binary AIFV-$2$ codes}

\label{sec:Intro}
Let $X$ be a memoryless  source over a finite alphabet $\mathcal{X}$ of size $n$.
$ \forall a_i \in \mathcal{X}$,   let $p_i = P_{X}(a_i)$ denote the probability of $a_i$ ocurring.
Without loss of generality  we assume that 
$$p_1 \geq p_2 \geq \cdots \geq p_n >0  \text{ and } \sum_{i=1}^{n}p_i = 1.$$
A  {\em codeword} $c$ of a binary AIFV code is  a string in $\{0,1\}^*.$
$|c|$ will denote   the length of codeword $c$. 

We assume that all probabilities are represented  using binary. Let $b_i$ be the number of bits used to represent $p_i$.
For later use we set $b= \max_i b_i.$ 
Since $\min_i p_i =  p_n \le \frac 1 n$, this implies $b = \Omega(\log n)$.  Note that that the total size of the problem's input is $\Omega(L)$ where 
$L = \sum_i b_i = \Omega(n \log n).$

We now briefly describe the structure of  Binary AIFV-$2$ codes using the terminology of \cite{introduceMAIFV}.
See  \cite{introduceMAIFV} for more details and \Cref{fig:BBB} for an example.
Codes are represented via binary trees with left edges  labelled by ``$0$'' and right edges by ``$1$''.
A code consists of  a pair of binary code trees,  $T_0, T_1$ satisfying: 
 \begin{itemize}
 \item Complete internal nodes in $T_0$ and $T_1$ have both left and right children.
\item Incomplete internal nodes (with the unique exception of the left child of the root of $T_1$)  have only a ``$0$'' (left) child.\\
Incomplete internal nodes are labelled as  either a {\em master} node or a {\em slave} node. 
\item  A master node  must be an incomplete node with an incomplete child\\
The child of a master node is a slave node.\\
{\em This implies that a master node is connected to its unique grandchild via  ``$00$'' with the intermediate node being  a slave node.}
\item Each source symbol is  assigned to one node in $T_0$ and one node in $T_1$. \\The nodes to which they are assigned are either leaves  or master nodes.\\
{\em Symbols are not assigned to complete internal nodes or slave nodes.}
\item The root of $T_1$ is complete and its ``$0$'' child is a slave node.\\
The  root of $T_1$ has no  ``$00$'' grandchild.

\end{itemize}

Let $c_s(a), s\in \{0,1\}$ denote the codeword of $a\in \mathcal{X}$ encoded by $T_s$. 
The encoding procedure for a sequence  $x_1, x_2\ldots$ of source symbols works as follows.\\
0.  Set $s_1 = 0$ and $j=1.$\\
1. Encode $x_j$ as $c_{s_j}(x_j).$\\
2. If $c_{s_j}(x_j)$ is a leaf  in $T_{s_j}$, then set $s_{j+1} =0$ \\ \hspace*{.4in} else   set 
$s_{j+1} =1$   \quad \%\ this occurs when $c_{s_j}(x_j)$ is a master node  in $T_{s_j}$ \\
3. Set $j=j+1$ and Goto 1.

\medskip

Note that a symbol is encoded using $T_0$ if and only if its predecessor was encoded using  a leaf node and it is encoded 
using $T_1$ if and only if its predecessor was encoded using  a master node. 
The decoding procedure is  a straightforward reversal of the encoding procedure. Details are provided in 
\cite{original} and \cite{dp}.  The important observation is that identifying the end of  a codeword might first require reading an extra two bits past its ending, resulting in a two bit delay,  so decoding  is not instantaneous.

\begin{example} The trees in 
\Cref{fig:BBB} encode  $X= {\bf b}d {\bf b}c {\bf a}a$   as
$Y=y_1 \ldots y_{12}.$
$$
\begin{array}{|c|c|c|c|c|c|c|c|c|c|c|c|c|}
\hline
i =  & 1 & 2 & 3 & 4 & 5 & 6 & 7 & 8 & 9 & 10 & 11 & 12 \\\hline
y_i = &1  &0 &  1&  1&  0 & 0 & 1  & 0&  1 & 1 & 0 & 1\\ \hline
\end{array}
$$
Also write $X=x_1 x_2 x_3 x_4 x_5 x_6$.
To decode 
\begin{enumerate}
\item It  is known that $x_1$ was encoded using $T_0$. After reading $y_1 y_2= 10$, the decoder does not know whether $x_1=b$ or if $1 0$ is just the prefix of some longer coding word.   After reading $y_3= 1$ (a delay of 1 bit) it  knows for  certain that $x_1=b.$ 
\item Since $x_1$ was encoded using a master node, it is known that  $x_2$ was encoded using $T_1$. After reading $y_3 y_4= 1 1$ the decoder is at node $c$ in $T_1$ but does not know whether $x_2=c$ or not.  After reading $y_5=0$, it still does not know.  It is possible that $x_2=c$ and $y_5$ is the beginning of the encoding of $x_3.$  It is also still possible that $x_2=d.$  After reading $y_6=0$ it knows for certain that  $x_2 =d.$
\item Since $x_2$ was encoded using a leaf, $y_3$ was encoded using $T_0.$ 
\item Continuing, the process finds all of $X= {\bf b}d {\bf b}c {\bf a}a.$
\end{enumerate}
\end{example}

\begin{algorithm*}[h]
  \caption{Iterative algorithm to construct an optimal binary AIFV-$2$ code \cite{yamamoto2015almost,dp}}\label{DIV}
  \begin{algorithmic}[1]
  \State $m \leftarrow 0;$\  $C^{(0)}=2-\log_2(3)$
  \Repeat
    \State $m\leftarrow m+1$
    \State $T_0^{(m)}=\text{argmin}_{T_0}\{L(T_0)+C^{(m-1)}q_1(T_0)\}$
    \State $T_1^{(m)}=\text{argmin}_{T_1}\{L(T_1)-C^{(m-1)}q_0(T_1)\}$
    \State Update  $$C^{(m)}=\frac{L\left(T_1^{(m)}\right)-L\left(T_0^{(m)}\right)}{  q_1\left(T_0^{(m)}\right) + q_0\left(T_1^{(m)}\right)  }$$
  \Until $C^{(m)}=C^{(m-1)}$
  \State //  Set $C^* = C^{(m)}. $ Optimal binary AIFV-$2$ code is $T_0^{(m)}$, $T_1^{(m)}$
  \end{algorithmic}
  \label{alg: alg1}
\end{algorithm*}

Following \cite{original}, we can 
now  derive the average codeword length of a binary AIFV-$2$ code  defined by trees $T_0, T_1$.  The average codeword length $L(T_s)$ of $T_s$, $s \in \{0,1\},$  is 
$$L(T_s)=\sum_{i=1}^{n}|c_s(a_i)|p_i.$$

If the current symbol $x_j$ is encoded by a leaf (resp. a master node) of $T_{s_j}$, then the next symbol $x_{j+1}$ is encoded by $T_0$ (resp. $T_1$). This process can be modelled as  a two-state Markov chain with the state being the current encoding tree. 
Denote the transition probabilities for switching from code tree $T_s$ to $T_{s'}$ by $q_{s'}(T_s)$. Then, from the definition of the code trees and the encoding/decoding protocols:
$$q_0(T_s)=\sum_{a\in \mathcal{L}_{T_s}}P_X(a)
\quad \mbox{and} \quad 
q_1(T_s)=\sum_{a\in \mathcal{M}_{T_s}}P_X(a)$$
where $\mathcal{L}_{T_s}$ (resp. $\mathcal{M}_{T_s})$ denotes  the set of source symbols $a\in \mathcal{X}$ that are  assigned to a leaf node (resp. a master node) in $T_s$.

Given binary AIFV-$2$ code $T_0,T_1,$ 
as the number of symbols being encoded approaches infinity, the stationary probability of using  code tree $T_s$ can then be calculated to be 
\begin{equation}
\label{eq:2Dnondegen}
P(s\,|\, T_0, T_1)=\frac{ q_s(T_{\hat{s}})}{q_1(T_0)+q_0(T_1)}
\end{equation}
where
$\hat{s} = s -1.$

The average (asymptotically) codeword length (as the number of characters encoded goes to infinity)  
of a binary AIFV-$2$ code
is then
\begin{equation}
\label{eq:LAIFV def}
L_{AIFV}(T_0, T_1)=P(0\,|\,T_0, T_1)L(T_0)+P(1\,|\,T_0, T_1)L(T_1).
\end{equation}

\cite{original, yamamoto2015almost} showed that the  binary AIFV-$2$ code $T_0,T_1$ minimizing \Cref{eq:LAIFV def} 
 can be obtained by \Cref{alg: alg1}.  In Lines, 4 and 5,  ${\mathcal T}_s(n)$ for $s \in \{0,1\}$ is the (exponentially sized) set of all possible coding trees $T_s$ for $n$ characters.
 They implemented the minimization (over all $T_s \in {\mathcal T}_s(n)$)    as an  ILP.
 In a later paper, \cite{dp}, the authors replaced this ILP with a  $O(n^5)$ time and $O(n^3)$ space
 DP 
 that modified a top-down tree-building Dynamic  Program  from 
\cite{golin,chan2000dynamic}.

\cite{yamamoto2015almost,dp} proved algebraically that \Cref{alg: alg1}  would terminate after a finite number of steps and that the  resulting trees $T_0^{(m)},$  $T_1^{(m)}$ are an optimal Binary AIFV-$2$ code. They were unable, though, to provide any bounds on the number of steps needed for termination.


\section{Lower Envelopes and the Geometry of Binary AIFV-$2$ Codes}
\label{sec:LE and Geo}

In this section we describe a new way of viewing \Cref{alg: alg1},  providing an alternative interpretation of its  behavior.  More importantly, this view will also lead to the development of the two new  procedures for solving the same problem.

Implicit in the definition of  Binary AIFV-$2$ Codes is that, if $T_0$ or $T_1$ is part of an optimal coding tree pair, then every master node and leaf have associated source symbols and that a slave node does not have a slave node child (otherwise the code cost  can be improved).  This trivially  implies that the tree contains at most $3n$ nodes and therefore has height at most $3n$.  These loose bounds will be needed later.

\begin{figure*}[t]
 \center
\includegraphics[width=2.9in]{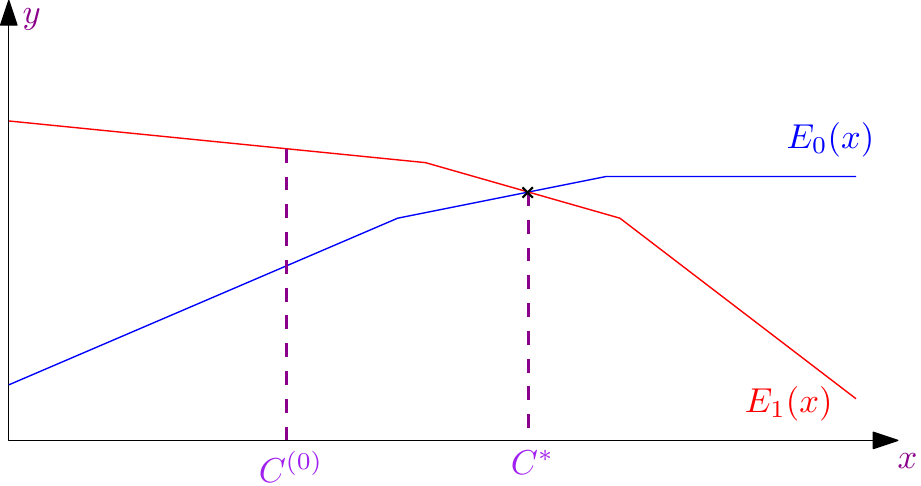}

\vspace*{.3in}

\includegraphics[width=2.9in]{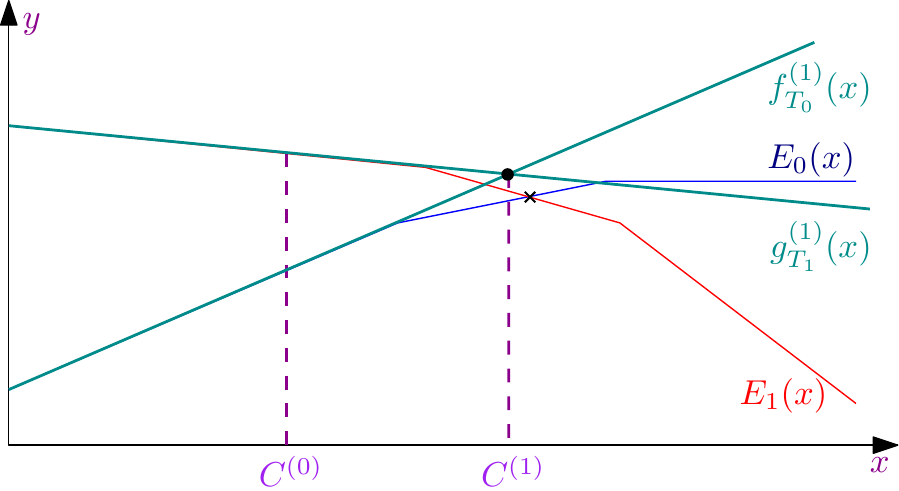}
  \caption{$E_0(x)$ is the lower envelope of the $f_{T_0}(x);$ $E_1(x)$  the lower envelope of the $g_{T_1}(x).$
  They intersect at a unique point  with $x^* = C^*.$ The bottom diagram shows the first step of \Cref{alg: alg1}.  It finds the supporting lines $f_{T_0}^{(1)}(x)$ and $g_{T_1}^{(1)}(x)$  at $x = C^{(0)}$ and then finds $C^{(1)},$ the $x$-coordinate of their unique intersection point.
  }
  \label{fig:CCC}
  \end{figure*}


\medskip

\begin{Note}
Let $s \in \{0,1\}$ and set $\hat s = 1 -s.$   Our algorithms and analyses do not use the actual trees but only the values of  $L(T_s)$ and   $q_{\hat s}(T_s) = 1 -q_{\hat s}(T_{\hat s}).$

We therefore say that $T_s, T'_s  \in {\cal T}_s(n)$ are {\em equivalent} if  $L(T_s) = L(T'_s)$ and  $q_{\hat s}(T_s)  = q_{\hat s}(T'_s).$
In particular, this permits stating that, for any $\alpha \in [0,1]$, and $\beta > 0$ there exists at most one tree $T_s \in {\cal T}_s(n)$ with  $q_{\hat s}(T_s) = \alpha$ such $L(T_s) = \beta.$
\end{Note}

\medskip
Let $s \in \{0,1\}$ and set $\hat s = 1 -s.$
Observe that for    
$$\forall T_s\in \mathcal{T}_s(n), \quad L(T_s) >  0 \quad\mbox{and} \quad  0 \le q_{\hat s}(T_s)\le 1.$$
 Also, there exist\footnote{While it is ``intuitive'' that a tree $T_1$ with $q_0(T_1)=0$ can not be part of an optimal Binary AIFV-$2$ code, the definition still permits its existence so we must include it in our base analysis. Lemma \ref{lem:nodegb} later will formally  validate this intuition.}
  trees  $T_s$ such that $q_{\hat s}(T_s)=0$ and also exist trees  $T_s$ such that $q_{\hat s}(T_s)=1.$ 
%
%
 In what follows, refer to \Cref {fig:CCC}.

\medskip

For every  fixed $T_0\in \mathcal{T}_0(n)$ define the linear function
$$f_{T_0}(x)=L(T_0)+q_1(T_0)x.$$

Now consider the lower envelope $E_0(x)$ of these lines.
More formally, 
$$\forall x\in \mathbb{R},  \quad \mbox{set}  \quad  E_0(x)=\min_{T_0 \in \mathcal{T}_0(n)}  f_{T_0}(x) = \min_{T_0 \in \mathcal{T}_0(n)} \{L(T_0)+q_1(T_0)x\}.$$
As the lower envelope of a finite set of lines, $E_0(x)$ is concave  so it is piecewise linear with decreasing slope. 
 Thus,  for small enough $x$, $E_0(x)$ has slope $1$ and for large enough $x$ it has slope $0.$

Similarly, for every  fixed $T_1\in \mathcal{T}_1(n)$, define the linear function
$$g_{T_1}(x)=L(T_1)-q_0(T_1)x$$
and consider the lower envelope $E_1$ of $g_{T_1}(x)$.
More formally,
$$\forall x\in \mathbb{R},  \quad \mbox{set}  \quad E_1(x)=\min_{T_1 \in \mathcal{T}_1(n)}  g_{T_1}(x)  =\min_{T_1 \in \mathcal{T}_1(n)} \{L(T_1)-q_0(T_1)x\}.$$
$E_1$ is also concave  so it is also piecewise linear with decreasing slope. 
Thus,
 for small enough $x$, $E_1(x)$ has slope $0$ and for large enough $x$ it has slope $-1.$
Hence,  
\begin{lem}
\label{lem:intsol}
$E_0(x)$ and  $E_1(x)$ intersect.  More specifically a real solution to 
$E_0(x) = E_1(x)$ exists and either 
\begin{itemize}
\item[$(S_1)$] the solution is a unique  point $x^*$ or 
\item[$(S_2)$] the solution  is an interval $[\ell,r]$ such that 
$\forall x^* \in [\ell,r],$
\begin{equation}
\label{eq:C0p}
\hspace*{-.2in} q_1(T_0(x^*)) = q_0(T_1(x^*)) = 0
\quad\mbox{and}\quad
L(T_0(x^*)) = L(T_1(x^*)).
\end{equation}
\end{itemize}
\end{lem}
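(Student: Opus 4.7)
The plan is to study the difference $D(x) := E_1(x) - E_0(x)$ and show it is continuous, non-increasing, tends to $+\infty$ as $x \to -\infty$, and tends to $-\infty$ as $x \to +\infty$. Any such function has a zero set that is automatically a (possibly degenerate) closed interval, so the dichotomy in the lemma is immediate; establishing case $(S_2)$ reduces to reading off what the slopes must be whenever the zero set has positive length.

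First I would verify the asymptotic behavior. Because there exists $T_0$ with $q_1(T_0)=1$, the line $f_{T_0}(x) = L(T_0)+x$ eventually dominates $E_0$ on the left and forces $E_0(x) \to -\infty$ at slope $1$; on the other hand the existence of $T_1$ with $q_0(T_1)=0$ means $E_1(x)$ is bounded above by the constant $L(T_1)$, so $E_1$ has a finite limit as $x \to -\infty$. Hence $D(x) \to +\infty$ there, and symmetrically $D(x) \to -\infty$ as $x \to +\infty$. Continuity of $D$ follows because $\mathcal{T}_0(n)$ and $\mathcal{T}_1(n)$ are finite (each coding tree has at most $3n$ nodes), so $E_0,E_1$ are minima of finitely many lines. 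The intermediate value theorem then produces at least one zero of $D$.

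The crucial step is showing $D$ is non-increasing. At any $x$ where derivatives are defined, $E_0'(x) = q_1(T_0^*(x)) \in [0,1]$ for a minimizing $T_0^*(x)$ while $E_1'(x) = -q_0(T_1^*(x)) \in [-1,0]$ for a minimizing $T_1^*(x)$, so $D'(x) = -q_0(T_1^*(x)) - q_1(T_0^*(x)) \le 0$. At the finitely many kink points concavity of $E_0$ and $E_1$ gives the same inequality for both one-sided derivatives, so $D$ is monotonically non-increasing globally. Combining monotonicity with the limits $\pm\infty$, the zero set of $D$ is a single closed interval $[\ell,r]$; if $\ell=r$ we are in case $(S_1)$.

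In case $(S_2)$, $\ell<r$ means $D\equiv 0$ on $[\ell,r]$, hence $D'\equiv 0$ on the interior. Since $D'$ is the negated sum of two non-negative quantities, this forces $q_1(T_0^*(x)) = q_0(T_1^*(x)) = 0$ for every interior $x^*$. Substituting back, $E_0(x^*) = L(T_0^*(x^*))$ and $E_1(x^*) = L(T_1^*(x^*))$, and the equality $E_0(x^*) = E_1(x^*)$ yields $L(T_0^*(x^*)) = L(T_1^*(x^*))$, which is exactly \Cref{eq:C0p}. The main obstacle is the minor bookkeeping at kinks: one must make precise what $T_s(x^*)$ denotes when several trees achieve the minimum. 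The equivalence convention in the boxed note, together with the fact that any minimizer at a kink is optimal on both adjacent pieces, makes the statement well defined without extra work.
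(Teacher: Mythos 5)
Your proof is correct and is essentially the formalization the paper intends: the paper deduces \Cref{lem:intsol} directly from the facts that $E_0$ and $E_1$ are lower envelopes of finitely many lines, hence piecewise linear and concave with slopes ranging over $[0,1]$ and $[-1,0]$ respectively, which is exactly your monotone-difference argument for $D=E_1-E_0$ (limits $\pm\infty$, intermediate value theorem, and the flat-interval analysis yielding $(S_2)$). One small caveat: your justification at the endpoints $\ell,r$ is wrong as stated --- a minimizer at a kink of a lower envelope is \emph{not} in general optimal on both adjacent pieces --- but the fix is immediate: the slope-zero lines supporting $E_0$ and $E_1$ on $(\ell,r)$ remain minimizers at $\ell$ and $r$ by continuity, so $T_0(x^*)$ and $T_1(x^*)$ may be taken to be those trees there, which is all that \Cref{eq:C0p} requires.
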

Now set
\begin{eqnarray*}
T_0(x) &=& \text{argmin}_{T_0 \in \mathcal{T}_0(n)} \left\{ f_{T_0}(x)\right\},\\
T_1(x) &=& \text{argmin}_{T_1 \in \mathcal{T}_1(n)}  \left\{g_{T_1}(x)\right\}.
\end{eqnarray*}

In this notation, lines 4, 5 in \Cref{alg: alg1}  can be rewritten as
$$T_0^{(m)}=T_0\left( C^{(m-1)}\right)
\quad\mbox{and}\quad
T_1^{(m)}=T_1\left( C^{(m-1)}\right).$$

\Cref {alg: alg1}  can now be seen as implementing the following dynamic process on $E_0(x)$ and  $E_1(x).$
Line 1 starts by setting some value   $x_0 = C^{(0)}$ as an initial guess for $C^{*}$.
At the end of  step $m-1,$  the algorithm has $x_{m-1} = C^{(m-1)}.$ Then
\begin{itemize}
\item  In  lines 4, 5  it finds $T_0(x_{m-1})$,  $T_1(x_{m-1})$, the  indices of the respective  supporting lines of $E_0(x)$, $E_1(x)$ at $x=x_{m-1}.$
\item Line 6 (working through the algebra) then finds the unique intersection point of those two supporting lines
$ f_{T_0^{(m-1)}}(x)$   and  $ g_{T_1^{(m-1)}}(x)$ and sets $x_m =C^{(m)}$ to be the $x$-coordinate of that point.
\end{itemize}

The analysis in \cite{yamamoto2015almost,dp} implicitly assumes scenario $(S_1)$ in  Lemma \ref{lem:intsol}, i.e., that 
$E_0(x) = E_1(x)$  has a unique solution  $x^*.$
Line 7 can then only be satisfied if $C^{(m-1)}= x^*$ and their analysis
of  \Cref{alg: alg1}  can then be recast as saying that that the algorithm converges in a finite number of steps to $x^*,$ which gives the optimal solution $T_0(x^*),$  $T_1(x^*).$

Using our notation, their proof can be restated as:
\begin{lem} \cite{yamamoto2015almost,dp}
\label{lem:sol}
If  $x^*$  is a unique real solution to $E_0(x^*) = E_1(x^*)$, then $T_0(x^*),\, T_1(x^*)$ is an optimal Binary AIFV-$2$ code. Furthermore, \Cref{alg: alg1} will find that code.
\end{lem}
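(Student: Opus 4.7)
My plan is to split the proof into two parts: (i) showing that $(T_0(x^*), T_1(x^*))$ is an optimal binary AIFV-$2$ code, and (ii) showing that \Cref{alg: alg1} terminates and outputs this pair.

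For part (i), the key identity to establish is $L_{AIFV}(T_0(x^*), T_1(x^*)) = E_0(x^*)$. Substituting $L(T_0(x^*)) = E_0(x^*) - q_1(T_0(x^*))\,x^*$ and $L(T_1(x^*)) = E_1(x^*) + q_0(T_1(x^*))\,x^*$ into \Cref{eq:LAIFV def}, the $x^*$ contributions cancel and the common value $E_0(x^*) = E_1(x^*)$ factors out of the numerator to match the denominator, delivering the claim. For the lower bound, given an arbitrary pair $(T_0, T_1)$, the envelope inequalities $f_{T_0}(x^*) \geq E_0(x^*)$ and $g_{T_1}(x^*) \geq E_0(x^*)$ are weighted by $q_0(T_1)$ and $q_1(T_0)$ respectively; adding them and using the same $x^*$ cancellation yields $q_0(T_1)L(T_0) + q_1(T_0)L(T_1) \geq (q_0(T_1)+q_1(T_0))E_0(x^*)$, so $L_{AIFV}(T_0, T_1) \geq E_0(x^*)$ whenever the denominator is positive. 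Degenerate pairs with $q_1(T_0) = q_0(T_1) = 0$ must be excluded separately via a structural observation such as \Cref{lem:nodegb}.

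For part (ii), I would first establish the invariant $L_{AIFV}(T_0^{(m)}, T_1^{(m)}) = f_{T_0^{(m)}}(C^{(m)}) = g_{T_1^{(m)}}(C^{(m)})$ by a direct calculation using the update formula from line~6, and observe that this common value equals $\min_x \phi^{(m)}(x)$ where $\phi^{(m)}(x) := \max\{f_{T_0^{(m)}}(x),\, g_{T_1^{(m)}}(x)\}$ is convex (being the pointwise maximum of an increasing and a decreasing affine function) with its V-shaped minimum at $C^{(m)}$. The argmin rules in lines~4--5 then give $\phi^{(m+1)}(C^{(m)}) \leq \phi^{(m)}(C^{(m)}) = L_{AIFV}(T_0^{(m)}, T_1^{(m)})$, and since $L_{AIFV}(T_0^{(m+1)}, T_1^{(m+1)}) = \min \phi^{(m+1)} \leq \phi^{(m+1)}(C^{(m)})$, the sequence $L_{AIFV}(T_0^{(m)}, T_1^{(m)})$ is monotonically non-increasing. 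Because only finitely many equivalence classes of tree pairs exist (per the Note preceding the lemma), this sequence stabilizes; at stabilization the inequality chain collapses to equalities, forcing $C^{(m+1)} = C^{(m)}$ (the unique minimizer of $\phi^{(m+1)}$), which triggers the stopping condition. The stabilized value $C^*$ automatically satisfies $f_{T_0(C^*)}(C^*) = g_{T_1(C^*)}(C^*)$, i.e., $E_0(C^*) = E_1(C^*)$; the uniqueness hypothesis then identifies $C^* = x^*$.

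The main obstacle is the termination argument in part (ii): correctly identifying $L_{AIFV}$ per iteration with $\min \phi^{(m)}$, and handling the degenerate configurations in which both supporting lines are horizontal so that the intersection $C^{(m+1)}$ of line~6 is ill-defined (this parallels scenario $(S_2)$ of \Cref{lem:intsol}, which the uniqueness hypothesis rules out at the optimum but which could in principle arise at intermediate iterations). Part (i) by contrast is a short algebraic manipulation once the lower-envelope picture from \Cref{sec:LE and Geo} is in place.
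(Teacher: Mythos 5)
The paper never proves this lemma in-text: it is imported from \cite{yamamoto2015almost,dp} (``their proof can be restated as\ldots''), so your reconstruction stands on its own. Part (i) of your argument is correct and complete in the essentials: the substitution $L(T_0(x^*))=E_0(x^*)-q_1(T_0(x^*))x^*$, $L(T_1(x^*))=E_0(x^*)+q_0(T_1(x^*))x^*$ does make the $x^*$ terms cancel and gives $L_{AIFV}(T_0(x^*),T_1(x^*))=E_0(x^*)$, and weighting the envelope inequalities by $q_0(T_1)$ and $q_1(T_0)$ and adding gives $L_{AIFV}(T_0,T_1)\ge E_0(x^*)$ for every pair with $q_1(T_0)+q_0(T_1)>0$; pairs with $q_1(T_0)=0$ satisfy $L_{AIFV}(T_0,T_1)=L(T_0)=f_{T_0}(x^*)\ge E_0(x^*)$ by the coding-procedure observation of \Cref{lem:nodegb}(b), whose proof does not use the present lemma, so there is no circularity.

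The genuine gap is in part (ii), and it is exactly the one you flag but do not close. Your chain ``cost sequence is non-increasing over finitely many values, hence stabilizes; at stabilization $C^{(m)}$ is a minimizer of $\phi^{(m+1)}$, hence $C^{(m+1)}=C^{(m)}$'' needs two non-degeneracy facts about the \emph{iterates}, not about $x^*$: first, line 6 and the identity $L_{AIFV}(T_0^{(m)},T_1^{(m)})=\min_x\phi^{(m)}(x)$ require $q_1(T_0^{(m)})+q_0(T_1^{(m)})>0$; second, the inference $C^{(m+1)}=C^{(m)}$ requires the minimizer of $\phi^{(m+1)}$ to be \emph{unique}, i.e.\ both slopes nonzero. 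The uniqueness hypothesis on $x^*$ does not supply this: $q_1(T_0(x))=0$ can genuinely occur (the Huffman-degenerate situation of \Cref{lem:nodegb}(b), case (b) of \Cref{lem:22cases}, is compatible with a unique intersection point), and then the minimizer set of $\phi^{(m+1)}$ is the half-line $[C^{(m+1)},\infty)$, so constancy of the cost only yields $C^{(m)}\ge C^{(m+1)}$ and the stopping test of \Cref{alg: alg1} need not fire; in principle the iterates could keep moving at constant cost. Closing this requires extra bookkeeping — e.g.\ showing $C^{(m)}>0$ throughout (so the chopping argument behind \Cref{lem:nodegb}(a) gives $q_0(T_1^{(m)})>0$ and line 6 is well defined), plus either a monotonicity argument for the $C^{(m)}$ themselves (they take finitely many rational values, so strict monotone progress terminates) or an explicit treatment of the one-horizontal-line case. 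That is precisely the part carried out in \cite{yamamoto2015almost,dp} and sidestepped in this paper via assumptions $(A_1)$, $(A_2)$; as written, the ``\Cref{alg: alg1} will find that code'' half of the lemma is not yet established.
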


The difficulty in analyzing   the running time of the algorithm is that, a-priori, it is possible that all the (exponential number of) trees  $T_s\in \mathcal{T}_s(n)$
contribute   a supporting line of the lower envelope $E_s(x)$.
 The iterative process might then need to examine all of these supporting lines, which would be very slow. 
In the next section  we describe two new algorithms  that sidestep this issue to provide explicit bounds on the number of iterations needed.  


Before concluding we quickly discuss two assumptions made by \cite{yamamoto2015almost,dp} 
\begin{itemize}
\item [($A_1$)]  As previously noted, they assume $(S_1)$ in  Lemma \ref{lem:intsol}, i.e., that 
$E_0(x) = E_1(x)$  has a unique solution  $x^*.$ \\ If  $(S_2)$ occurred, then   Line 6 in \Cref{alg: alg1} would not be feasible and Lemma \ref{lem:sol} would not  immediately follow. (Their proof could be modified so that this assumption  is not necessary, though).
\item[($A_2$)] Starting with $C^{(0)} =  2 - \log_2(3)$, they assume that, in
\Cref {alg: alg1}, for all $m,$   $C^{(m)} \in [0,1]$.  \\  $C^{(m)} \in [0,1]$ was  implicitly required for the correctness of their $O(n^5)$  Dynamic Program for implementing Lines 4 and 5 of \Cref{alg: alg1}. Without this assumption, the DP would need to be replaced by the less efficient Integer Linear Program.   
\end{itemize}
Our new algorithms will need similar assumptions, but  replace ($A_2$) with the weaker condition
\begin{itemize}
\item[($A'_2$)]
 $x^* \in  [0,1]$. 
 \end{itemize}
  We  will need both of these assumptions even more strongly for our algorithms and therefore provide proofs of the validity of $A_1$ and $A'_2$ in the next subsection.

  \subsection{Proving the Assumptions ($A_1$)  and   ($A'_2$).}
\label{sec:cinunitint}

Before preceding we note that various places in this section discuss  ``master nodes with  no codeword descendants''.   Since  a master node must have an edge  descending from it, this might, at first glance, seem impossible.   The formal definition, though, only requires the existence of the {\em edge} but not that the edge leads to codewords. 

It is also not a-priori obvious that removing that edge (and everything below it) would keep the code cost unchanged.
Tree edges are not only  {\em links} to  codewords {\em below} them.  They may  also {\em signal} that a codeword {\em above } them is a master node,  forcing a  transition  to a new codetree,  and it is this transition  that could  be making the code optimal.  Removing unused edges might thus  {\em increase} the cost of the code. 
The reality is that this will not be an issue but this must be proven and can not be assumed.

\begin{figure}[t]
 \center
  \includegraphics[width=2.3in]{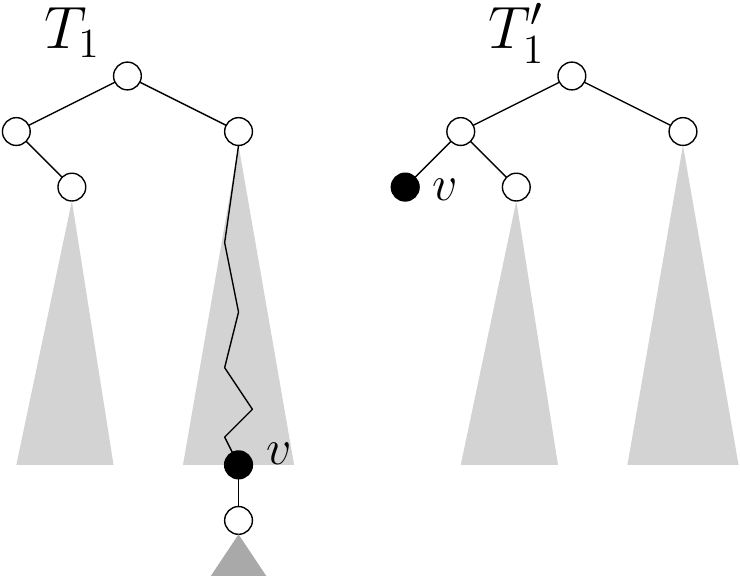}
 \caption{The construction in  the proof of  Lemma \ref{lem:t1p transform}.}
 \label{AAAX}
\end{figure}

We first need the following simple observation.\footnote{The authors thank Haoren Zhu for pointing this out, permitting the simplification of the later lemmas.}
\begin{lem}
\label{lem:t1p transform}
Assume  $n > 3.$ Let $T_1\in \mathcal{T}_1(n)$ and $a_i$ be the longest codeword in $T_1$ with $v$ the associated node in the tree. Remove $v$ and any descendants from $T_1$.  Make $a_i$ the left-left child of the root of $T_1$. Call the resulting tree $T'_1$.  Then
\begin{itemize}
\item  $T'_1\in \mathcal{T}_0(n)$
\item $L(T'_1) < L(T_1).$
\end{itemize}
\end{lem}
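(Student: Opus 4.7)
My plan treats the lemma as two sub-claims --- validity of $T'_1$ in $\mathcal{T}_0(n)$ and the strict cost decrease --- to be verified in sequence. For the cost comparison I first pin down $d = |c_1(a_i)|$, the depth of the longest codeword in $T_1$, and argue $d \ge 3$ whenever $n > 3$. The argument is a direct enumeration: in any $T_1 \in \mathcal{T}_1(n)$ the root is complete (no codeword), its ``$0$'' child is a slave (no codeword), and the ``$00$'' grandchild is absent; this leaves only ``$1$'', ``$01$'', ``$10$'', ``$11$'' as codeword-capable positions at depth at most $2$, with ``$10$'' and ``$11$'' existing only when ``$1$'' is complete internal (in which case ``$1$'' itself carries no codeword). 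A short case split on the status of the ``$1$'' child shows at most three of these positions can host a codeword simultaneously, so with $n > 3$ at least one codeword must sit at depth at least $3$.

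For the structural claim I would first use the maximality of $d$: any codeword strictly inside $v$'s subtree would have depth at least $d + 2$, so $v$'s subtree contains no codewords other than $a_i$, and removing $v$ with its descendants deletes the single symbol $a_i$ while leaving every other source symbol in place. I would then describe the effect of inserting $a_i$ as a leaf at position ``$00$'': the root's ``$0$'' child, previously a slave with only a ``$1$'' child (the $T_1$-specific exception), acquires a ``$0$'' child, and under the $\mathcal{T}_0(n)$ rules (which have no ``no-$00$-grandchild'' clause and no $T_1$-exception on the root's ``$0$'' child) this node becomes a well-formed complete internal node. I would then walk through the remaining $\mathcal{T}_0(n)$ invariants one by one --- every complete internal still has both children, every surviving incomplete internal still has only its ``$0$'' child with a consistent master/slave label, and every source symbol still lives at a leaf or master --- to confirm $T'_1 \in \mathcal{T}_0(n)$.

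Once $T'_1$ is validly constructed, the cost comparison is immediate: every $a_j$ with $j \ne i$ keeps its codeword length, while $a_i$'s length drops from $d$ to $2$, so $L(T'_1) - L(T_1) = p_i\,(2 - d) < 0$ because $p_i > 0$ and $d \ge 3$. The main obstacle is the structural step: confirming that the local surgery around $v$ leaves no node in an invalid incomplete-internal state without a consistent master/slave label. The paragraph preceding the lemma --- flagging that master nodes with no codeword descendants are formally permissible --- is precisely what lets the cleanup of $v$'s (possibly dummy) subtree proceed without disturbing any surviving labels on its ancestors and without needing to relocate any symbol other than $a_i$.
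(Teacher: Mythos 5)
Your proposal is correct and takes essentially the same route as the paper's (much terser) proof: delete $v$ together with its possibly codeword-free descendants, reattach $a_i$ in the empty $00$ slot below the root of $T_1$, and conclude $L(T'_1)<L(T_1)$ from $p_i>0$ and $\mathrm{depth}(v)\ge 3>2$ when $n>3$. Your explicit enumeration showing that at most three codewords can occupy depth $\le 2$ in any $T_1$ simply fills in the step the paper asserts directly from $n>3$, so no further comparison is needed.
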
 
\begin{proof} See \Cref {AAAX}.
Note that, formally, $v$ might be  a master node (and thus with an edge descending below it)  but no codewords below it.  This is why $v$ {\em and its descendants} are all removed.

Since $T_1$'s left child did not originally have a left child, $a_i$ can be moved there to become the leftmost grandchild of the root. 
$T'_1\in \mathcal{T}_0(n)$ by definition.   $n >3$ ensures that $depth(v)$ in $T_1$ was at least 3. $ L(T'_1) < L(T_1)$  follows.
\end{proof}

\begin{figure}[t]
 \center
  \includegraphics[width=2.3in]{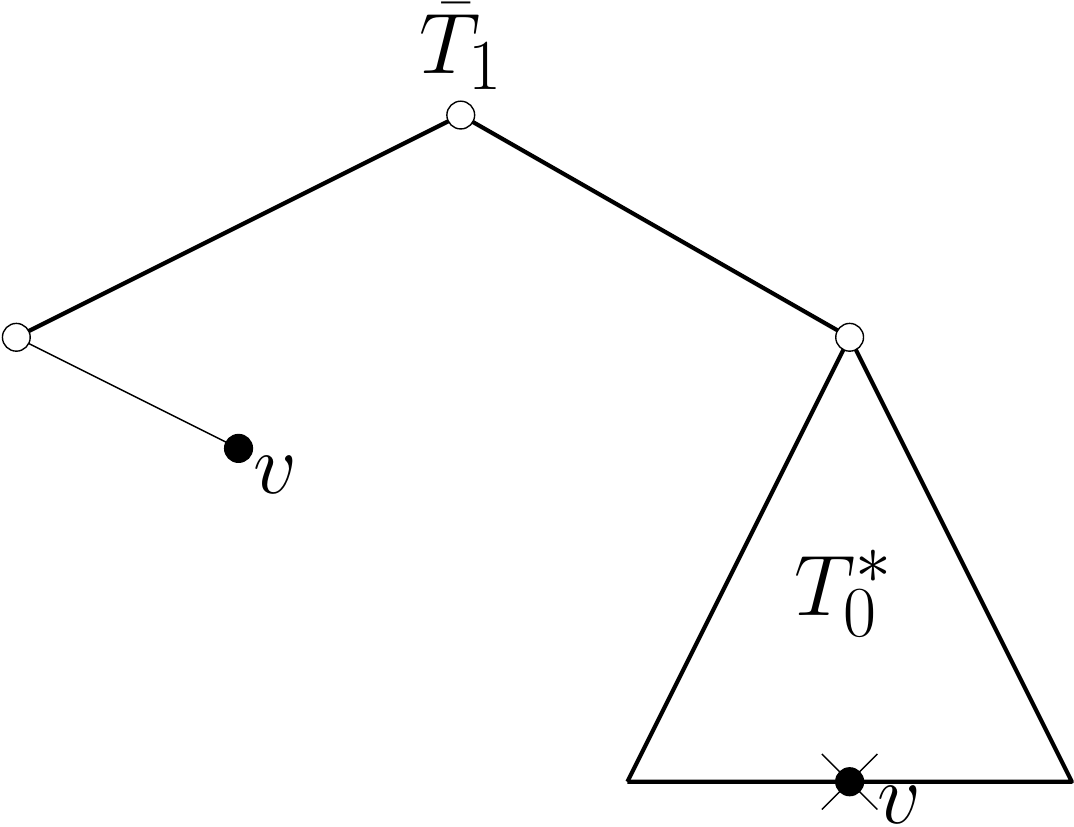}
 \caption{In the proof of  Lemma \ref{lem:inint},  $\bar T_1$ is constructed by taking a leaf $v$ of $T_0^*$ and making it $\bar T_1$'s  $01$ child  and then appending the remaining $T_0^*$ as
  $\bar T_1$'s   child.}
 \label{AAA}
\end{figure}

\begin{lem}  There exists $x^* \in (0,1]$ such that $E_0(x^*) = E_1(x^*).$
\label{lem:inint}
\end{lem}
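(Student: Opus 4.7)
The plan is to evaluate $E_0-E_1$ at the endpoints of the unit interval, show it is strictly negative at $x=0$ and non-negative at $x=1$, and then invoke continuity together with the intermediate value theorem; since each $E_s$ is the pointwise minimum of a finite family of affine functions it is piecewise-linear and continuous, so the sign change forces a zero $x^* \in (0,1]$. For the left endpoint I would apply \Cref{lem:t1p transform}: let $T_1^\star$ be an optimizer of $E_1(0) = \min_{T_1} L(T_1)$ and run it through the lemma to obtain $T' \in \mathcal T_0(n)$ with $L(T') < L(T_1^\star)$, whence $E_0(0) \le L(T') < L(T_1^\star) = E_1(0)$. The small cases $n \le 3$ excluded by \Cref{lem:t1p transform} can be handled by direct inspection.

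For the right endpoint I would build an explicit witness $\bar T_1 \in \mathcal T_1(n)$ from an optimizer $T_0^\star$ of $E_0(1) = \min_{T_0}\{L(T_0)+q_1(T_0)\}$, using the construction illustrated in \Cref{AAA}: pick any leaf $v$ of $T_0^\star$ (so its depth satisfies $d_v \ge 1$ since $n \ge 2$), assign $v$ to the ``$01$'' node of $\bar T_1$ sitting underneath the slave at the root's ``$0$''-position, and hang the rest of $T_0^\star$ off the root's ``$1$''-child, leaving $v$'s former slot in that copy as an unused leaf. Then every symbol $a_i \neq v$ sits at depth $d_i+1$ in $\bar T_1$, $v$ sits at depth $2$, and the leaves resp.\ master nodes of $T_0^\star$ correspond to leaves resp.\ master nodes of $\bar T_1$, so $q_s(\bar T_1) = q_s(T_0^\star)$ for $s \in \{0,1\}$. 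Using $q_0(T_0^\star)+q_1(T_0^\star)=1$, a short computation yields
\[
L(\bar T_1)-q_0(\bar T_1) \;=\; L(T_0^\star)+q_1(T_0^\star)+p_v(1-d_v) \;\le\; L(T_0^\star)+q_1(T_0^\star) \;=\; E_0(1),
\]
and hence $E_1(1) \le E_0(1)$.

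The main technical checking is verifying that $\bar T_1$ really does lie in $\mathcal T_1(n)$: leaves may be unused (so carrying an extra unused leaf in the copy of $T_0^\star$ is harmless), the slave at the root's ``$0$''-position is precisely the exceptional incomplete node allowed to carry a single ``$1$''-child, and by construction $\bar T_1$ has no ``$00$''-grandchild of the root. This structural check is the only part of the argument involving any real case distinction; the rest reduces cleanly to the two endpoint estimates and one invocation of the intermediate value theorem.
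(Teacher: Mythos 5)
Your route is essentially the paper's: show $E_0(0)<E_1(0)$ by pushing an optimizer of $E_1(0)$ through \Cref{lem:t1p transform}, show $E_0(1)\ge E_1(1)$ by the construction of \Cref{AAA} (a leaf of $T_0^\star$ moved to the ``$01$'' slot, the rest of $T_0^\star$ hung off the root's ``$1$''-child), then conclude by continuity of the two piecewise-linear envelopes. Your endpoint-$1$ computation is in fact a sharper, equality-based version of the paper's estimates: where the paper only uses $L(\bar T_1)\le L(T_0^\star)+1$ and $q_0(\bar T_1)\ge q_0(T_0^\star)$ and then cancels terms, you track the exact change $p_v(1-d_v)\le 0$, and that arithmetic checks out (and is robust to leaving $v$'s old slot as an unused leaf, since $q_0$ only sums over symbols assigned to leaves).

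The one genuine omission is the very first move at $x=1$: ``pick any leaf $v$ of $T_0^\star$.'' Nothing you say guarantees that an optimizer of $E_0(1)=\min_{T_0}\{L(T_0)+q_1(T_0)\}$ has a leaf carrying a symbol at all; a tree in $\mathcal{T}_0(n)$ may assign every symbol to a master node (so $q_1(T_0)=1$, $q_0(T_0)=0$), and then your construction has no starting point and the identity $q_s(\bar T_1)=q_s(T_0^\star)$ is vacuous. The paper closes this explicitly before building $\bar T_1$: if $T_0^\star$ had no leaf, turning its deepest master node into a leaf (deleting the slave/descendants below it) keeps every codeword depth, hence $L$, unchanged while strictly decreasing $q_1$, so it strictly decreases $L+q_1$, contradicting the optimality of $T_0^\star$ at $x=1$; note this argument leans on the coefficient of $q_1$ being $1>0$, i.e.\ it is specific to the right endpoint. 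With that one line added (and your promised hand-check of $n\le 3$, a corner the paper itself glosses over since \Cref{lem:t1p transform} assumes $n>3$), the proof is complete and matches the paper's.
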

\begin{proof}
Recall that  $E_0(x)$ is concave and nondecreasing, and $E_1(x)$ is concave and nonincreasing.  It therefore suffices to show 
 that $E_0(0) < E_1(0)$ and that $E_0(1)\geq E_1(1)$. Then, by continuity $E_0(x^*)= E_1(x^*)$ for some  $x^* \in (0,1].$


Consider tree $T_1= T_1(0).$  From Lemma \ref{lem:t1p transform},
$$
E_0(0) = \min_{T_0\in \mathcal{T}_0(n)}\{L(T_0)\}\\
             \le  L(T'_1) \\
             < L(T_1)
	    = E_1(0),
$$
proving $E_0(0) < E_1(0)$.

To  prove  $E_0(1) \geq E_1(1)$, 
fix $x=1$ and let $T_0^*= T_0(1)$ and  $T_1^*=T_1(1),$ 
i.e. 
$$
  E_0(1) = L(T^*_0)+q_1(T^*_0)
\quad\mbox{and}\quad
  E_1(1) = L(T^*_1)-q_0(T^*_1).
$$
and
\begin{equation}
\label{eq:uniqueT0}
\forall T'\in \mathcal{T}_0(n),\quad  L(T_0^*)+q_1(T_0^*) \leq L(T') + q_1(T'),
\end{equation}
\begin{equation}
\label{eq:uniqueT1}
\forall T'\in \mathcal{T}_1(n),\quad  L(T_1^*)-q_0(T_1^*) \leq L(T') - q_0(T').
\end{equation}

Also note that $T^*_0$ must contain at least one leaf.  If not,   just changing the deepest master node into a leaf would create a new tree $T' \in  \mathcal{T}_0(n),$  with
$L(T') = L(T^*_0)$ but $q_1(T') <  q_1( T^*_0)$, contradicting \Cref{eq:uniqueT0}.

Now, we construct   (\Cref{AAA})  tree $\bar T_1$  by (i) removing  a leaf  from $T^*_0$ and making it a leaf on the $01$ path from the root of $\bar T_1$ and (ii) appending the remaining part of $T^*_0$ as the right child of the root of $\bar T_1.$ By this construction, $\bar T_1 \in  \mathcal{T}_1(n)$,
 $$L(\bar T_1)\leq L(T_0^*)+\sum_{i=1}^n p_i = L(T_0^*)+1$$
and 
$$q_0(\bar T_1)\geq q_0(T_0^*).$$
Thus, from \Cref{eq:uniqueT1},
\begin{eqnarray*}
L(T_1^*) - q_0(T_1^*) &\leq&  L(\bar T_1) - q_0(\bar T_1)\\
 &\leq& L(T_0^*)+1 - q_0(T_0^*), 
\end{eqnarray*}
so
$$L(T_0^*)-L(T_1^*) \geq q_0(T_0^*)-q_0(T_1^*)-1.$$
Now
\begin{eqnarray*}
E_0(1)-E_1(1)&=& L(T_0^*)+q_1(T_0^*)-L(T_1^*)+q_0(T_1^*) \\
&=& \left( L(T_0^*)-L(T_1^*) \right) + q_1(T_0^*)+q_0(T_1^*)\\
&\ge & \left( q_0(T_0^*)-q_0(T_1^*)-1 \right) + q_1(T_0^*)+q_0(T_1^*)\\
&=& 1 -1 = 0.
\end{eqnarray*}
$E_0(1)\geq E_1(1)$ follows.
\end{proof}

\begin{lem}\ 
\label{lem:nodegb}
\begin{enumerate}
\item[(a)] The solution to  $E_0(x^*) = E_1(x^*)$ is unique,  satisfies  $x^* \in (0,1]$ and  
\begin{equation}
\label{eq:x*horiz}
q_0(T_1(x^*)) >0.
\end{equation}
Consequentially, Scenario $(S_2)$ in Lemma \ref {lem:intsol} does not occur.
\item[(b)] Let $T_0,T_1$ be an optimal Binary AIFV-$2$ code.  If $q_1(T_0) = 0$, then  
$$\forall \bar T_1 \in \mathcal{T}_1(n),\quad L_{AIFV}(T_0, \bar T_1)=L(T_0)$$
so $\forall \bar T_1 \in \mathcal{T}_1(n),$
$T_0, \bar T_1$ is  an optimal Binary AIFV-$2$ code.
\end{enumerate}
  \end{lem}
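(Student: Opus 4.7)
My plan is to handle part (b) first by a short Markov chain argument, and then to attack part (a) by combining Lemma \ref{lem:inint} with a local tree surgery that rules out scenario $(S_2)$ of Lemma \ref{lem:intsol}. For part (b) I will invoke the Markov chain interpretation of the encoding: the transition probability from state $0$ to state $1$ equals $q_1(T_0)$, so if $q_1(T_0) = 0$ then an encoder beginning in state $0$ never leaves it and uses only $T_0$. The per-symbol cost is therefore $L(T_0)$, yielding $L_{AIFV}(T_0, \bar T_1) = L(T_0)$ for every $\bar T_1 \in \mathcal{T}_1(n)$; this matches \Cref{eq:LAIFV def} when $q_0(\bar T_1) > 0$ and resolves the $0/0$ degeneracy when $q_0(\bar T_1) = 0$. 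Since $(T_0, T_1)$ is optimal with cost $L(T_0)$, every pair $(T_0, \bar T_1)$ is optimal too.

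For part (a), existence of $x^* \in (0, 1]$ is immediate by combining $E_0(0) < E_1(0)$ and $E_0(1) \geq E_1(1)$ from Lemma \ref{lem:inint} with the continuity of the concave envelopes $E_0, E_1$ and the intermediate value theorem. The heart of the proof is to show $q_0(T_1(x^*)) > 0$; once this holds, uniqueness is automatic, since Lemma \ref{lem:intsol} says any non-unique solution set is scenario $(S_2)$, which forces $q_0(T_1(x^*)) = 0$. I will establish $q_0(T_1(x^*)) > 0$ by contradiction via a minimal local surgery. Suppose $q_0(T_1(x^*)) = 0$, so every source symbol of $T_1(x^*)$ sits at a master node. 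Let $v$ be the deepest source-carrying node, carrying symbol $a_j$ at depth $d \geq 1$, and form $T_1'$ from $T_1(x^*)$ by deleting the slave child of $v$ together with all of its descendants, and re-labelling $v$ itself as a leaf. Because $v$ was deepest, the pruned subtree contains no source symbols, so $L(T_1') = L(T_1(x^*))$; and since $a_j$ now lives at a leaf, $q_0(T_1') \geq p_j > 0$. Then
$$
g_{T_1'}(x^*) = L(T_1') - q_0(T_1') x^* \leq L(T_1(x^*)) - p_j x^* < L(T_1(x^*)) = E_1(x^*),
$$
using $x^* > 0$ and $p_j > 0$, which contradicts the minimality defining $T_1(x^*)$.

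The main obstacle I anticipate is structural bookkeeping: verifying that $T_1' \in \mathcal{T}_1(n)$. Converting an internal master into a leaf and pruning its descendants could in principle violate some axiom at $v$'s parent or leave a slave without its required child. Because $d \geq 1$, the root of $T_1(x^*)$ and its slave left child are untouched; the parent of $v$ only sees its child switch from master to leaf, and this is permitted by the axioms for every parent type (complete internal, interior slave, or root). The cautionary remark in the subsection preamble about ``master nodes without codeword descendants'' is precisely what licenses removing the subtree below $v$ without changing $L$, so the verification should go through cleanly.
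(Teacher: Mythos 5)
Your proof is correct and follows essentially the same route as the paper: part (b) via the observation that with $q_1(T_0)=0$ the encoder never leaves state $0$ so $\bar T_1$ is never used, and part (a) by combining Lemma \ref{lem:inint} with exactly the paper's surgery (prune the descendants of the deepest symbol-carrying master node of $T_1(x^*)$, turning it into a leaf) to contradict $q_0(T_1(x^*))=0$, then excluding $(S_2)$ via \Cref{eq:C0p}. Your extra bookkeeping (the $0/0$ degeneracy in \Cref{eq:2Dnondegen} and the check that $T_1'\in\mathcal{T}_1(n)$) only makes explicit what the paper leaves implicit.
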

\begin{proof}
(a) 
From   Lemma \ref{lem:inint} we may assume that a solution $x^* \in (0,1]$ exists.
Let  $T_0= T_0(x^*)$,  $T_1= T_1(x^*)$ and assume that \Cref{eq:x*horiz} is incorrect, i.e. $q_0(T_1) =0$.

Let $a_i$ be the longest codeword in $T_1$ with $v$ the associated node in the tree. Because  $q_0(T_1) =0,$  $v$ is a master node.  Remove the descendants of  $v$,  so $v$ is now  a leaf.  Call the resulting tree $\bar T_1.$  By construction,  $\bar T_1 \in {\mathcal T}_1(n)$,  $L(T_1) = L(\bar T_1)$ and $q_0(\bar T_1) = p_i >0.$
But then
\begin{eqnarray*}
E_1(x^*) &=& \min_{T_1\in \mathcal{T}_1(n)}\{g_{T_1}(x^*)\}\\
              &\le& g_{\bar T_1} (x^*)\\
              &=& L\left(\bar T_1\right) - q_0\left(\bar T_1\right)x^*\\
              &= & L(T_1) - q_0\left(\bar T_1\right)x^*\\
              &<&  L(T_1) \\
              &=& E_1(x^*),
\end{eqnarray*}
leading to a contradiction.

But if  $(S_2)$ occured,  then, from  \Cref{eq:C0p}, 
for {\em all } solutions $x^*$ to to $E_0(x^*) = E_1(x^*)$, $q_0(T_1(x^*))=0,$  contradicting the above.  Thus
$(S_2)$ can not occur and $x^*$ is unique.


(b)   Let $T_0, \bar T_1 $ be any Binary AIFV-$2$ code.  Since $q_1(T_0)=0,$   by the coding procedure, $\bar T_1$ is never used, so
$$P(0\,|\,T_0, \bar T_1) =1
\quad\mbox{and} \quad 
P(1\,|\,T_0, \bar T_1)=0.$$
Thus, from \Cref{eq:LAIFV def}
$$L_{AIFV}(T_0, \bar T_1)=L(T_0).$$
In particular,    
$$L_{AIFV}(T_0, \bar T_1)   =L_{AIFV}(T_0, T_1).$$

\end{proof}

Lemma  \ref {lem:nodegb}(a) implies  that assumption ($A_1)$ is always true, so the condition of  Lemma \ref{lem:sol} always applies and thus \Cref {alg: alg1} of \cite{yamamoto2015almost,dp} always  returns the correct answer.

Lemma \ref{lem:inint}  is slightly weaker than assumption ($A_2$) since it only states that $x^* \in [0,1]$ but does not immediately imply that $C^{(m)} \in [0,1]$ for all $m.$  While it is possible to strengthen the proof to validate ($A_2$), we do not do so since our algorithms only needs the weaker assumption ($A'_2$) to be correct.

Finally, we note that Lemma \ref{lem:nodegb} does  permit an optimal Binary AIFV-$2$ code to satisfy $q_1(T_0) = 0$.
In this case, $T_0$ is  the optimal tree in which all codewords are at leaves, i.e., the optimal Huffman code and AIFV coding reduces to Huffman coding.  This could occur, for example,  if the source distribution $X$ is dyadic so $L(T_0) = H(X),$ the entropy of $X$.  It is  therefore understandable that in this case,   AIFV codes could  not improve upon Huffman coding.  We single out  this special case, because our Ellipsoid-based algorithm will need to perform extra work to handle it.

\section{$O(n^5 b)$  time  algorithms for constructing binary AIFV-$2$ codes}
\label{sec: new algs}
\subsection{An $O(n^5 b)$  time Binary Search  Algorithm}
\label{subsec:bsearch}

\begin{algorithm}[t]
  \caption{An $O(n^5 b)$ Binary Search algorithm for constructing binary AIFV-$2$ trees}\label{DIV}
 \label{alg: alg2}
  \begin{algorithmic}[1]
  \State $l,r \leftarrow 0,1$; $\epsilon_0 = 2^{-2(b+1)}.$
  \Repeat
 \State $\textrm{mid}\leftarrow \frac{l+r}{2}$;
 \State $ e_0= E_0(\textrm{mid}) = f_{T_0(\textrm{mid})}(\textrm{mid})$; 
 \State $ e_1 = E_1(\textrm{mid})= g_{T_1(\textrm{mid})}(\textrm{mid})$
    \If {$e_0 <e_1$}
    	\State $l =\textrm{mid}$
    \Else
    	\State $r=\textrm{mid}$
    \EndIf
\Until $r-l = \epsilon_0$
 \\
 \State//  Now find  $x^* \in [l,r],$
  \State Construct   $T_0(l)$, $T_1(l)$,    $T_0(r)$ $T_1(r)$ and their corresponding lines.
  \State In $O(1)$ further time (as described at end of  \Cref{subsec:bsearch}) find $x^* \in [l,r],$ the 
   intersection point of $E_0(x)$ and $E_1(x),$
    \State  Optimal AIFV code is $T_0(x^*)$ $T_1(x^*)$
  \end{algorithmic}
\end{algorithm}

The geometric interpretation in Section \ref{sec:LE and Geo}  suggests  using a simple  \textbf{binary search}  to replace the iterative process of \Cref {alg: alg1}. This  is written in  \Cref {alg: alg2} and explained below.
\medskip

%
%

The $O(n^5)$ time DPs from \cite{dp} for evaluating lines 4, 5 in \Cref {alg: alg1} are actually explicitly constructing the trees  $T_0(x)$ and $T_1(x)$  for any fixed $x \in [0,1].$   We will use those same construction DPs in \Cref {alg: alg2}. Note that after constructing  trees $T_0(x)$ and $T_1(x)$, 
values $E_0(x)$ and $E_1(x)$  can be evaluated  in an additional  $O(n)$ time and whether $E_0(x) \le E_1(x)$ or not can then be checked in  another $O(1)$ time.

Lemma \ref{lem:intsol}  implies that,
for any $l < r$, 
$$ E_0(l) \le E_1(l)  \mbox{ and }  E_0(r) \ge E_1(r)
\quad  \mbox{if and only if} \quad  \exists x^* \in [l,r]$$
where $x^*$ the unique  intersection point of $E_0(x),$ and $E_1(x).$

From Lemma \ref{lem:inint},  $x^* \in [0,1]$. We therefore  use a binary search halving algorithm to maintain an interval  $[l,r]\subseteq [0,1]$ containing $x^*.$  
 Start by setting $[l,r] = [0,1]$. 

 At each step, set  $\textrm{mid}= (l+r)/2$ and check  
if  
 $E_0(\textrm{mid}) \le E_1(\textrm{mid})$.  If yes, set $[l,r] = [\textrm{mid},r]$.  If no, set 
 $[l,r] = [l,\textrm{mid}]$.  From the discussion above, this maintains  $ x^* \in [l,r].$

Since \Cref {alg: alg2} always maintains $[l,r] \in [0,1]$, all calls to the dynamic programs in  lines 4, 5, 13 and  15 of  \Cref {alg: alg2} are correctly finding $T_0(x),\, T_1(x)$ in $O(n^5)$ time.

Since the procedure halves $r-l$ at each step, 
after $\log_2 \frac{1}{\epsilon_0}= 2(b+1) $ steps,  $r-l = \epsilon_0$ and the algorithm proceeds to Line 13.

\medskip

Finally,  recall that $E_0(x)$ and $E_1(x)$ are piecewise linear functions.  Their {\em critical points} will be the values of $x$ at which they change slope.  

\begin{lem}
\label{lem:spacing}
 Let $x_1,x_2$ be two critical points of $E_0(x)$ (resp. $E_1(x)$). Then $$|x_1 - x_2| \ge 2^{-2(b+1)}.$$
\end{lem}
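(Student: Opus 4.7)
\medskip

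\noindent\textbf{Proof plan.} The plan is to show that every critical point of $E_0(x)$ is a rational number whose denominator is at most $2^b$ in absolute value, and then invoke the classical fact that two distinct rationals with denominators bounded by $D$ differ by at least $1/D^2$.

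First I would use the definition of the lower envelope. A critical point of $E_0(x)$ is an $x$-coordinate at which the envelope changes slope, so it is the intersection point of two distinct supporting lines $f_{T_0}$ and $f_{T'_0}$ with $T_0, T'_0 \in \mathcal{T}_0(n)$ and $q_1(T_0) \neq q_1(T'_0)$. Solving $f_{T_0}(x) = f_{T'_0}(x)$ gives
\begin{equation*}
x \;=\; \frac{L(T'_0)-L(T_0)}{q_1(T_0)-q_1(T'_0)}.
\end{equation*}

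Next I would use the binary representation of the input probabilities. Since each $p_i$ is a $b_i$-bit binary fraction and $b_i \le b$, the number $2^b p_i$ is a nonnegative integer for every $i$. Because $L(T_s) = \sum_i |c_s(a_i)|\, p_i$ is a nonnegative-integer combination of the $p_i$'s, and $q_1(T_s) = \sum_{a_i\in\mathcal{M}_{T_s}} p_i$ is a $0/1$ combination, multiplication by $2^b$ makes both $L(T_s)$ and $q_1(T_s)$ integers. Moreover, $|2^b(q_1(T_0)-q_1(T'_0))| \le 2^b$ since both $q_1$-values lie in $[0,1]$. Thus every critical point can be written as
\begin{equation*}
x \;=\; \frac{N}{M}, \qquad N,M \in \mathbb{Z}, \qquad 1 \le |M| \le 2^{b}.
\end{equation*}

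Now, given two distinct critical points $x_1 = N_1/M_1$ and $x_2 = N_2/M_2$ with $|M_1|,|M_2| \le 2^b$, the quantity $N_1 M_2 - N_2 M_1$ is a nonzero integer, so
\begin{equation*}
|x_1 - x_2| \;=\; \left|\frac{N_1 M_2 - N_2 M_1}{M_1 M_2}\right| \;\ge\; \frac{1}{|M_1 M_2|} \;\ge\; 2^{-2b} \;>\; 2^{-2(b+1)}.
\end{equation*}
The argument for $E_1(x)$ is identical after replacing $f_{T_0}$ by $g_{T_1}$ and $q_1$ by $q_0$.

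The only subtlety is verifying that $L(T_s)$ and $q_s(T_s)$ can always be rescaled to integers by the common denominator $2^b$; this is immediate once one notes that codeword lengths are integers and $b_i \le b$ for all $i$. No other step requires more than elementary arithmetic, so this is a clean rationality-plus-common-denominator argument rather than a combinatorial obstacle.
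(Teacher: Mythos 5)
Your proof is correct and follows essentially the same route as the paper: express each critical point as a ratio of quantities that become integers after scaling by $2^b$, bound the denominator, and conclude that two distinct critical points of the envelope differ by at least the reciprocal of the product of the denominators. The only cosmetic difference is that you use the exact slope difference $q_1(T_0)-q_1(T'_0)$ (giving the slightly stronger bound $2^{-2b}$), while the paper bounds the denominator by $2^{b+1}$, yielding the stated $2^{-2(b+1)}$.
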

\begin{proof}
We prove the lemma for $E_0(x)$.  The proof for  $E_1(x)$ is almost exactly the same.

Let $f_{T_0}(x)$ and  $f_{T'_0}(x)$ be two supporting lines of $E_0(x)$ that meet at a critical point $ x'.$  
Then $$x' = \frac {L(T'_0)- L(T_0)} {q_1(T_0)+q_1(T'_0)}.$$ 

Since every $p_i$ can be expressed using $b$ bits, every $p_i$ can be written as  $c 2^{-b}$ for some integer $1 \le c  \leq 2^b.$
This implies  
\begin{eqnarray*}
q_1(T_0) &=& c_1 2^{-b},\\
q_1(T'_0) &=&  c_2 2^{-b},\\
L(T_0) &=& c_3 2^{-b},\\
L(T'_0) &= &c_4 2^{-b}
\end{eqnarray*}
for some integers $c_i$  satisfying  $0  \le c_1,c_2 \leq 2^{b}$, 
and  $0 < c_3,c_4 \le  3n 2^{b}$.

This further implies that $x' = \frac {K_1} {K_2}$ where $K_1, K_2$ are integers and $0 \le  K_2 \leq 2^{b+1}.$  

This is true for every critical point $x'$.  Thus,  if $x_1,x_2$  are two critical points of $E_0(x)$, then $|x_1-x_2| \geq \left( \frac 1 {2^{b+1}}\right)^2 = 2^{-2(b+1)}.$
\end{proof}

As noted, at the start of  Line  13,  $r-l = \epsilon_0.$
From  Lemma \ref{lem:spacing},  $[l,r]$  therefore contains at most one critical point each for $E_0(x)$ and $E_1(x).$  This means that
$$
\forall x \in [l,r],\quad
\begin{array}{rcl}
E_0(x) &=& \min\left(f_{T_0(l)}(x),f_{T_0(r)}(x)\right)\\
E_1(x) &=& \min\left(g_{T_1(l)}(x),g_{T_1(r)}(x)\right)
\end{array}
$$

Note that it is possible that $f_{T_0(l)}(x) =f_{T_0(r)}(x)$ and, likewise, that $g_{T_1(l)}(x)= g_{T_1(r)}(x)$.

By construction, $[l,r]$   also contains the unique   intersection point $x^*$  of  $E_0(x)$ and $E_1(x)$.

Since,  in $[l,r],$ $E_0(x)$ and $E_1(x)$ are each composed of either one or two known line segments we can, in
$O(1)$ time,  calculate  $x^*$. 

In another $O(n^5)$ time we calculate $T_0(x^*),\, T_1(x^*)$ which, from Lemma \ref{lem:sol},    are an optimal binary AIFV code.

\medskip

To recap,  in $O(n^5 b)$ time the binary halving  algorithm found $[l,r]$ containing $x^*$ and in another $O(n^5)$ time found the optimal AIFV code $T_0(x^*)$ and $T_1(x^*)$
  Thus, the entire algorithm requires  $O(n^5 b)$ time.

Note that this is very different than  \Cref {alg: alg1} which also used $O(n^5)$ time per step but was only guaranteed  to converge in a finite but unbounded  number of steps.

\subsection{ An $O(n^5 b)$ time Ellipsoid Algorithm}
\label{subsec:ellipsoid}

\begin{figure*}[t]
 \center
\includegraphics[width=4.5in]{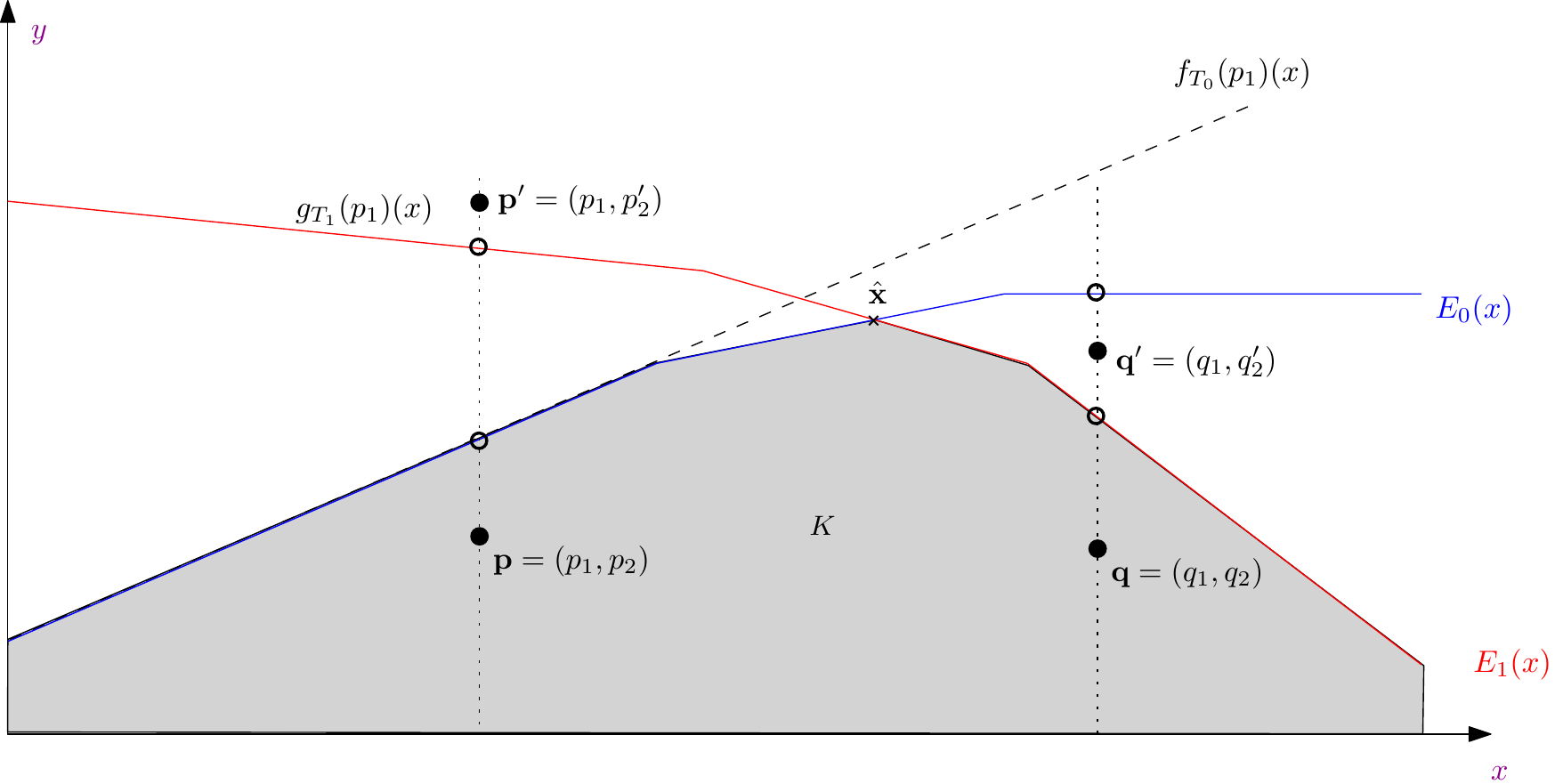}
  \caption{
 $ K = \bigl\{(x_1,x_2) \,:\,   0 \le x_1 \le 1 \ \mbox{and}  \  0 \le x_2 \le \min\left( E_0(x_1),\, E_1(x_1)\right) \bigr\}$ is convex. $\hat \bfx= (\hat x_1, \hat x_2)$, the highest point in $K,$ is the unique intersection of
 $E_0(x)$ and $E_1(x)$ so $\hat x_2 = E_0(x^*).$  
 Points $\bfp,\bfq \in K$, so a separation oracle would report that they are inside $K.$  Points $\bfp', \bfq'\not\in K$. $\bfp'$ is separated from  from $K$ by either one of  the lines  $f_{T_0(p'_1)}(x)$ or $g_{T_1(p'_1)}(x).$
 $\bfq'$ is separated from $K$ by the line $g_{T_1(q'_1)}(x)$ but not by the line $f_{T_0(q'_1)}(x)$.
%
%
  }
  \label{fig:DDD}
  \end{figure*}
  
We now introduce 
another $O(n^5 b)$ time algorithm  for solving the problem. While it is more complicated and no faster  than  \Cref{alg: alg2},  it is introduced because it provides some guidance as to how one might be able to solve the problem for  $m > 2.$

Let $K\in \mathbb{R}^m$ be a closed convex set. 
Let $c \in \mathbb{R}^m$.  The convex maximization problem is to find $\hat \bfx \in K$ such that
\begin{equation}
\label{eq:Kmax}
c^T \hat \bfx = \max \left\{c^T \bfx\,:\, \bfx\in K \right\}.
\end{equation}

See Figure \ref{fig:DDD}. In the binary AIFV-$2$ problem, $m=2$. Set 
\begin{eqnarray*}
M(x) &=&  \min\left( E_0(x),\, E_1(x)\right)\\
K &=& \bigl\{(x_1,x_2) \,:\,   0 \le x_1 \le 1 \ \mbox{and}  \  0 \le x_2 \le M(x_1) \bigr\}\\
c&=&(0,1).
\end{eqnarray*}

By the concavity of  $E_0(x)$ and $E_1(x)$,  $M(x)$ is concave so $K$ is convex.  

Now let $x^*$ be the unique  solution of $E_0(x) = E_1(x)$ and set $\bfx^* = \left(x^*,E_0(x^*)\right).$

Since $E_0(x)$ is a non-decreasing function, 
$$\forall x < x^*,\quad   M(x) \le E_0(x)  \le E_0(x^*).$$
Similarly, since  $E_1(x)$ is a non-increasing function, 
$$\forall x > x^*,\quad   M(x) \le E_1(x)  \le E_1(x^*).$$

Thus $\hat \bfx =  \bfx^*$  achieves the maximum value in  \Cref {eq:Kmax} with $c=(0,1).$  We can actually say something stronger

\begin{figure*}[t!]
 \center
\includegraphics[width=5in]{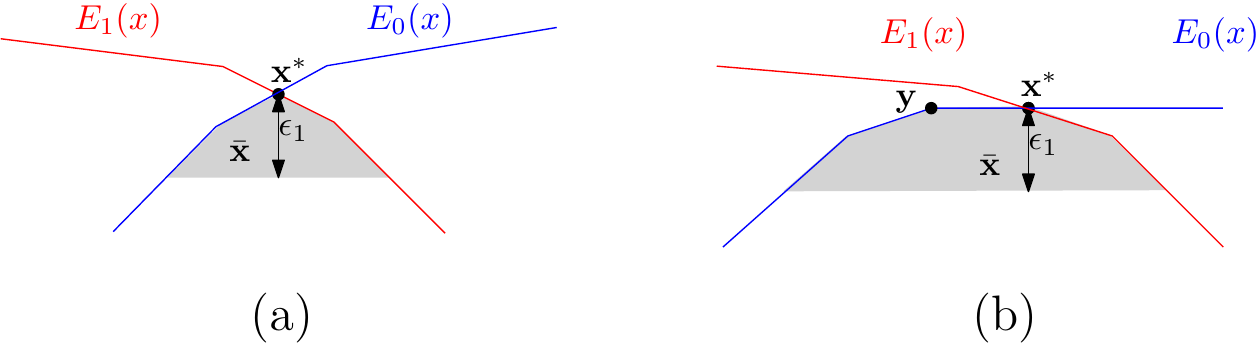}
  \caption{ Illustration of the two scenarios that can occur in Lemma \ref{lem:22cases} and \Cref{alg: ellipse2}. Note that these correspond to the similar areas in \Cref {fig:2Dgeo}. The shaded area in each diagram is the set of points $\bar x \in K$ such that 
  $E_0(x_1^*)  - \bar x_2 \le \epsilon_1$ and so might be returned by Line 2 of \Cref{alg: ellipse2}.  In scenario  (a), Line 12 will  find $x'= x_1^*.$
   In  scenario (b), Line 12  might not find $x^*_1$ but will return some $x' \in [y_1,x^*_1]$.
   In both cases   $T_0(x'),$  $T_1(x')$ is an optimal Binary AIFV-$2$ code.  In  (b),  $x'$ depends upon the location of  $\bar \bfx$; there are two possible trees  $T_1(x')$ that could be returned.  
Since  $q_1(T_0(x^*_1))=0,$ the returned tree $T_1(x')$ is never used though;  $T_0(x^*_1)$ is the optimal Huffman code which, in this case, is also the optimal AIFV-$2$ code.}
  \label{fig:4cases2D}
  \end{figure*}

\begin{lem}
\label{lem:22cases}
Let $\hat \bfx= (\hat x_1, \hat x_2)\in K$ be any point that  achieves the maximum value in  \Cref {eq:Kmax} with $c=(0,1).$  Then 
$T_0(\hat x_1),$ $T_1(\hat x_1)$ is an optimal Binary AIFV-$2$ code.
\end{lem}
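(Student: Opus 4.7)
The plan is to first identify $\hat x_2$ with the optimal AIFV-$2$ cost and then argue, by cases on the position of $\hat x_1$ relative to $x^*$, that $(T_0(\hat x_1), T_1(\hat x_1))$ attains this cost. Using $L(T_0) = E_0(x) - q_1(T_0)x$ at any $T_0$ minimizing $f_{T_0}(x)$ together with the analogous identity for $T_1$, a short algebraic computation analogous to that behind Line 6 of \Cref{alg: alg1} gives
\[
L_{AIFV}(T_0(x^*), T_1(x^*)) \;=\; E_0(x^*) \;=\; E_1(x^*).
\]
Denote this common value by $L^*$; by Lemma \ref{lem:sol} it is the optimal AIFV-$2$ code cost. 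Since $E_0$ is nondecreasing and $E_1$ is nonincreasing with unique crossing at $x^*$, $M(x)\le L^*$ on $[0,1]$ with equality at $x^*$, so the maximum of $c^{T}\bfx$ over $K$ equals $L^*$ and hence $\hat x_2 = L^*$.

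Next I rule out $\hat x_1 > x^*$. If it held, then $M(\hat x_1) = L^*$ together with $E_1$ nonincreasing and $E_1(x^*) = L^*$ would force $E_1$ to be identically $L^*$ on $[x^*, \hat x_1]$. Concavity of $E_1$ combined with its nonincreasing nature then forces both one-sided slopes of $E_1$ at $x^*$ to equal $0$, so every minimizer $T_1(x^*)$ would satisfy $q_0(T_1(x^*)) = 0$, contradicting Lemma \ref{lem:nodegb}(a). If $\hat x_1 = x^*$, Lemma \ref{lem:sol} directly yields the result.

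The main case is $\hat x_1 < x^*$. Here $E_0(\hat x_1) = L^*$ and, because $E_0$ is nondecreasing with $E_0(x^*) = L^*$, $E_0$ is identically $L^*$ on $[\hat x_1, x^*]$. Concavity of $E_0$ together with monotonicity then forces both one-sided slopes of $E_0$ at $x^*$ to be $0$, so every minimizer $T_0(x^*)$ satisfies $q_1(T_0(x^*)) = 0$ and its supporting line is the horizontal line $y = L^*$. Because this same tree is then also a minimizer of $f_{T_0}$ at every point of $[\hat x_1, x^*]$, we may take $T_0(\hat x_1) := T_0(x^*)$. Since $(T_0(x^*), T_1(x^*))$ is an optimal AIFV-$2$ code with $q_1(T_0(x^*)) = 0$, Lemma \ref{lem:nodegb}(b) yields $L_{AIFV}(T_0(x^*), \bar T_1) = L(T_0(x^*)) = L^*$ for every $\bar T_1 \in \mathcal{T}_1(n)$, and in particular for $\bar T_1 = T_1(\hat x_1)$. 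The pair $(T_0(\hat x_1), T_1(\hat x_1))$ therefore achieves cost $L^*$ and is optimal.

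The main obstacle is the non-uniqueness of the argmin defining $T_0(\hat x_1)$ at the left endpoint $\hat x_1 = \inf\{x : E_0(x) = L^*\}$, where supporting lines of positive slope also minimize $f_{T_0}(\hat x_1)$; the lemma should be read as asserting the existence of an argmin choice making the pair optimal, and the Huffman-type tree $T_0(x^*)$ singled out above is always such a valid choice (which is also consistent with the two-possible-$T_1$ remark in \Cref{fig:4cases2D}).
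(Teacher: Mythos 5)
Your proof is correct and takes essentially the same route as the paper's: both reduce the non-degenerate situation to Lemma \ref{lem:sol} and handle the degenerate situation (where $q_1(T_0(x^*))=0$, so $E_0$ is horizontal to the left of $x^*$) via Lemma \ref{lem:nodegb}(b), with Lemma \ref{lem:nodegb}(a) ruling out any flat top to the right of $x^*$, so your case split on the position of $\hat x_1$ is just a reparametrization of the paper's case split on $q_1(T_0(x^*))$. Your closing caveat about selecting the zero-slope (Huffman-type) minimizer when the argmin at $\hat x_1$ is not unique makes explicit a tie-breaking point that the paper's proof passes over silently when it asserts $T_0(\hat \bfx)=T_0(x^*)$, and is a reasonable reading rather than a gap.
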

\begin{proof}
See \Cref{fig:4cases2D}.

From  Lemma \ref{lem:sol},  $T_0(x^*)$,  $T_1(x^*)$ is an optimal Binary AIFV-$2$ code.  From  Lemma \ref{lem:nodegb},    $x^*$ is unique and $q_0(T_1(x^*)) >0.$   There are then only two possibilities:

\begin{itemize}
\item[\bf (a)]    $q_1(T_0(x^*)) >0$:
In this case,  $x^*$ is the unique solution to $\argmax_{x \in \Re} M(x)$. Thus, $\hat \bfx = \bfx^*$  and 
correctness immediately  follows from Lemma \ref{lem:sol}. 

\item[\bf (b)]   $q_1(T_0(x^*)) =0:$

Let $\bfy=(y_1,y_2)$ be the leftmost  point in $K$ such that $ y_2 = M(x^*).$  By the convexity of $K,$
 $\hat x_1 \in [y_1,x_1^*],$  $T_0(\hat x) = T_0(x^*)$ and  
$$M(\hat x) = E_0(\hat x) = E_0(x^*) = \max_{x \in \Re} M(x).$$		
 Lemma \ref{lem:nodegb} then implies that $T_0(\hat x), T_1(\hat x)$  is   an optimal Binary AIFV-$2$ code.
 
 Note that in this case $T_0(\hat x) = T_0(x^*)$ is the optimum Huffman code (and $T_1(\hat x)$ is never actually used).
\end{itemize}
\end{proof}

 A {\em separation oracle} for $K$ is  a procedure that, for any
$\bfx \in  \mathbb{R}^m$, either reports that $\bfx \in K$ or, if $\bfx \not \in K$, returns  a hyperplane that separates $\bfx$ from $K.$   That is, it returns $a \in \mathbb{R}^m$ such that  $\forall \bfz \in K,$ $a^T \bfx > a^T  \bfz.$

A famous result due to Gr{\"o}tschel,  Lov{\'a}sz and Schrijver \cite{ellipsoid} states that (even if $K$ is a polytope defined by an exponential number of constraints) an approximate optimal solution to \Cref{eq:Kmax}  can be solved via the ``ellipsoid method'' with only a ``logarithmic'' number of calls to the separation oracle. 
More explicitly, their result is
\begin{thm} \cite{ellipsoid}
\label{thm:ellipsoid}
Let $K\in \mathbb{R}^m$ be a closed convex set and $c\in \mathbb{Q}^m$. Assume   a known separation oracle for $K$. Also assume known positive values $R$ and $\epsilon$ such that $K\subset B(0,R)$ and $Vol(K)>\epsilon$. 
Then, $\bar x \in K$ satisfying 
$$c^T \bar x \geq \max \{c^Tx \,:\, x\in K \}-\epsilon_1 ||c||$$
can be found 
using at most  
$  3m(\log\frac{1}{\epsilon}+2m\log(2R)+m\log\frac{1}{\epsilon_1})$  separation oracle calls.
\end{thm}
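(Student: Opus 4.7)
Since this is the classical Grötschel--Lovász--Schrijver convex optimization theorem, my approach would follow the standard ``sliding objective'' ellipsoid method. The plan is to initialize $E_0 = B(0,R) \supseteq K$. At iteration $k$, with ellipsoid $E_k$ centered at $x_k$, query the separation oracle at $x_k$. If $x_k \in K$, record it as a candidate best feasible point and cut by the objective halfspace $H_k = \{z : c^T z \geq c^T x_k\}$; otherwise the oracle returns a separator $a_k$, and we cut by $H_k = \{z : a_k^T z \geq a_k^T x_k\}$, which still contains $K$. Let $E_{k+1}$ be the minimum-volume L\"owner--John ellipsoid enclosing $E_k \cap H_k$. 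The central geometric fact---essentially the only nontrivial computation---is the half-ellipsoid update inequality
$$\mathrm{Vol}(E_{k+1}) \leq e^{-1/(2(m+1))}\,\mathrm{Vol}(E_k),$$
which is independent of $R$, $K$, and $c$ and follows from an explicit formula for the enclosing ellipsoid of a half-ball after an affine reduction.

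Next I would analyze when the method can stop. Let $x^* = \argmax\{c^T x : x \in K\}$ and let $\bar{x}$ be the best feasible center encountered in the first $N$ iterations. Every $z \in K$ with $c^T z > c^T \bar{x}$ has survived all cuts, so it must lie in $E_N$. To force an $\epsilon_1 \|c\|$--approximation, I need a lower bound on the volume of the near-optimal slab
$$K_{\epsilon_1} = \{z \in K : c^T z \geq c^T x^* - \epsilon_1 \|c\|\}$$
that exceeds $\mathrm{Vol}(E_N)$. Using $\mathrm{Vol}(K) \geq \epsilon$, the idea is to pick a ball $B \subseteq K$ of that volume, then shrink $B$ toward $x^*$ by a factor $\lambda$; convexity of $K$ guarantees the shrunken ball stays inside $K$, and the containment $K \subseteq B(0,R)$ lets one verify that $\lambda$ proportional to $\epsilon_1/(2R)$ already pushes the entire shrunken ball into $K_{\epsilon_1}$. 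This yields a bound of the form $\mathrm{Vol}(K_{\epsilon_1}) \geq c_1\, \epsilon \,(\epsilon_1/R)^m$ for an absolute constant $c_1$.

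Combining the per-iteration volume decay with this lower bound, termination is forced as soon as $(2R)^m e^{-N/(2(m+1))}$ falls below $c_1\, \epsilon\, (\epsilon_1/R)^m$, and solving for $N$ gives $N = O\bigl(m\log(1/\epsilon) + m^2 \log(2R) + m^2 \log(1/\epsilon_1)\bigr)$, matching the stated bound $3m(\log(1/\epsilon) + 2m\log(2R) + m\log(1/\epsilon_1))$ up to constants absorbed into the oracle-call count. The main obstacle is the volume lower bound on $K_{\epsilon_1}$: the shrink-and-translate construction has to be engineered so that the resulting ball simultaneously lies inside $K$ (using convexity) and inside the target slab, and getting the right exponent $m$ in $\epsilon_1/R$ requires tracking the translation in the $c$-direction carefully against the radius of $B$. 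The remaining ingredients---the half-ellipsoid volume inequality and the resulting logarithmic iteration count---are routine calculations once this volume estimate is in hand.
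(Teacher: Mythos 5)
The paper itself offers no proof of Theorem \ref{thm:ellipsoid}: it is imported as a black box from Gr{\"o}tschel--Lov{\'a}sz--Schrijver, so the only comparison available is with the classical argument, and your outline is exactly that argument (sliding-objective central-cut ellipsoid: objective cuts at feasible centers, oracle cuts at infeasible ones, the $e^{-1/(2(m+1))}$ per-step volume decay, and a volume lower bound on the near-optimal slab $K_{\epsilon_1}$ forcing termination). Structurally it is correct and does yield the stated oracle-call count up to the constants you absorb.

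One intermediate step fails as written, though: from $\mathrm{Vol}(K) > \epsilon$ you cannot ``pick a ball $B \subseteq K$ of that volume.'' A flat convex body of volume $\epsilon$ (say, a thin slab inside $B(0,R)$) contains only balls of far smaller volume, so the object you shrink toward $x^*$ need not exist. The standard fix is simpler and avoids balls entirely: shrink $K$ itself, i.e.\ take $K_\lambda = x^* + \lambda(K - x^*)$ with $\lambda = \epsilon_1/(2R)$. Convexity gives $K_\lambda \subseteq K$; for $y = (1-\lambda)x^* + \lambda z$ with $z \in K$ one has $c^T(x^* - y) = \lambda\, c^T(x^* - z) \le \lambda \|c\|\,\|x^*-z\| \le 2R\lambda\|c\| = \epsilon_1\|c\|$, so $K_\lambda \subseteq K_{\epsilon_1}$; and $\mathrm{Vol}(K_\lambda) = \lambda^m \mathrm{Vol}(K) \ge \epsilon\left(\epsilon_1/(2R)\right)^m$, which is precisely the lower bound your termination count needs. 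You should also note explicitly that if no ellipsoid center ever lies in $K$, then only oracle cuts were made, so $K \subseteq E_N$ and $\mathrm{Vol}(E_N) \ge \epsilon$ gives the same contradiction; this guarantees the best feasible center $\bar x$ exists. With those two repairs your argument goes through and matches the bound $3m\bigl(\log\frac{1}{\epsilon}+2m\log(2R)+m\log\frac{1}{\epsilon_1}\bigr)$ up to the constant $2(m+1)$ versus $3m$.
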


\par\noindent{Note the following observations:}
\begin{itemize}
\item[(a)]  $\forall T_s\in \mathcal{T}_s(n)$,  the height of $T_s$ is $\le 3 n$. Thus  $\forall\, x \in[0,1],$ 
$f_{T_0}(x) \le  3n$
and  $g_{T_1}(x) \le 3n$. 
This implies
$$  \forall x \in[0,1],\quad
\min\left( E_0(x),\, E_1(x)\right) \le 3n,
$$ 
so $K\subset B(0,4n)$.

\medskip

\item[(b)] For all  $x \in[0,1],$  $f_{T_0(x)}(x) \ge L(T_0(x)) \ge 1$ so $E_0(x) \ge 1$.

Also,  $L(T_1(x)) \ge 1 \ge q_0(T_1(x))$ so $E_1(x) = g_{T_1(x)}(x) \ge 0$. Thus,  $M(x) \ge 0.$  Furthermore
$M(x^*) =E_0(x^*) \ge 1$.

From the convexity of $K$, the triangle with vertices $(0,0)$,  $(1,0),$  $(\bfx^*,1)$ is contained in $K,$ so
 $Vol(K) \ge 1/2.$

\medskip

\item[(c)] Separation Oracle: The DPs from  \cite{dp}  provide a simple $O(n^5)$ time separation oracle for  $\bfx=(x_1,x_2)$,  from the following observations:

\begin{itemize}
\item [(i)] If  $x_1 < 0,$  $x_1 > 1$ or $x_2 < 0$ report that $\bfx=(x_1,x_2) \not\in K$.  

The separating line is respectively  $x=0$  ($a =(-1,0)$),  \\ $x=1$  ($a =(1,0)$) or $y=0$ ($a =(0,-1)$).

\item[(ii)] Otherwise, $0 \le x_1 \le 1$  and  $ 0 \le x_2.$ 
Use the DPs to find $T_0(x_1)$ and $T_1(x_1)$  and thus
 $E_0(x_1),$   and $E_1(x_1).$ This is guaranteed to give a correct solution (in $O(n^5)$ time)  because $0 \le x_1 \le 1$ satisfies the requirements of the DPs in  \cite{dp} .
 
 \end{itemize}

 \begin{itemize}
 \item 
 If  $0 \le x_2 \le \min\left( E_0(x_1),\, E_1(x_1)\right)$ then report that $\bfx \in K.$
 \item  If not, report that $\bfx=(x_1,x_2) \not\in K$. \\ At least one of $x_2 > f_{T_0}(x_1)$ or  $x_2 > g_{T_1}(x_1)$ must hold.
\begin{itemize}
     \item If  $x_2 > f_{T_0}(x_1)$  \quad $\Rightarrow$  \quad line   $f_{T_0}(z)$ separates $\bfx$ from $K$. \hspace*{.07in} \\ $\bigl(a=(1,-q_1(T_0))\bigr)$
    \item If $x_2 > g_{T_1}(x_1)$  \quad $\Rightarrow$  \quad   line 
$g_{T_1}(z)$ separates $\bfx$ from $K$. \hspace*{.07in}  \\ $\bigl(a=(1,q_0(T_1))\bigr)$
 \end{itemize}
\end{itemize}
\end{itemize}

Thus, plugging into Theorem \ref{thm:ellipsoid}   with  $m=2$,  $R=4n$, $\epsilon  = 1/2$ and any $\epsilon_1$ gives that after 
$O(\log n + \log (1/\epsilon_1))= O(b + \log (1/\epsilon_1))$ oracle calls,  Gr{\"o}tschel,  Lov{\'a}sz and Schrijver's  algorithm finds $\bar \bfx=(\bar x_1, \bar x_2) \in K$ satisfying
$|\bar x_2 - E_0(x^*)| \le \epsilon_1.$ 

The new Binary AIFV-$2$  algorithm is  given in \Cref{alg: ellipse2}.

\begin{algorithm}[h]
  \caption{An $O(n^5 b)$ Ellipsoid Based Algorithm for constructing Binary AIFV-$2$ trees}
 \label{alg: ellipse2}
  \begin{algorithmic}[1]
  \State $\epsilon_0  \leftarrow  2^{-2(b+1)}$;  $\epsilon_1 = 2^{-(b+1)} \epsilon_0$ 
 \State Find  $\bar \bfx=(\bar x_1, \bar x_2) \in K$ satisfying  $|\bar x_2 - \hat x_2| \le  \epsilon_1$  
  \State \  
  \State  $l  \leftarrow  \max\left\{0, \bar x_1 - \epsilon_0/2\right\};$   $r \leftarrow  \min\left\{\bar x_1 + \epsilon_0/2,1\right\}$.
    \State Construct   $T_0(l)$, $T_1(l)$,    $T_0(r)$ $T_1(r)$ and their corresponding lines.\\
    \State In $[l,r]$, set \\
    \hspace*{.1in}   $E_0(x) = \min\left(f_{T_0(l)}(x),f_{T_0(r)}(x)\right)$\\
      \hspace*{.1in}   $E_1(x) = \min\left(g_{T_1(l)}(x),f_{T_1(r)}(x)\right)$\\
   \State In $[l,r]$, define $M(x) = \min \left(E_0(x), E_1(x)\right)$
    \State Set  $x'= \argmax_{x \in [l,r]} M(x)$ \quad  \%Break ties arbitrarily
     \State  return  $T_0(x'),\, T_1(x')$.

  \end{algorithmic}
\end{algorithm}

\begin{figure*}[t]
 \center
\includegraphics[width=5.5in]{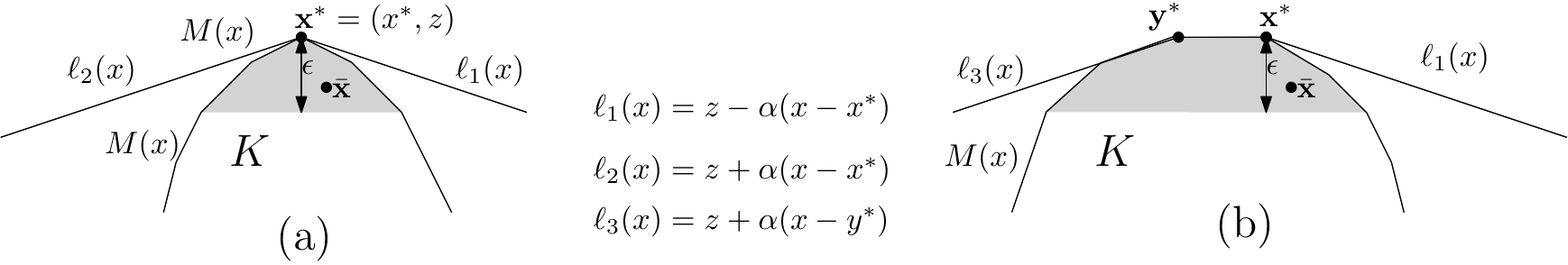}
  \caption{ Illustration of the two scenarios that can occur in Lemma \ref {lem:2Dgeo}.  $M(x)$ is  a bounded piecewise linear concave function and $K$ is the region below it. $z$ is the highest $y$ coordinate in $K.$  The slope of all  supporting lines of $K$ have absolute value $\ge \alpha.$  The ``top'' of $K$ can either be one point (case (a)) or a horizontal line segment (case (b)). In (a),  $K$ must be below the lines $\ell_1(x)$ and $\ell_2(x).$  In (b),  $K$ must be below the lines $\ell_1(x)$, $\ell_3(x)$ and the line $y=z.$ Thus, if $\hat x_2 > z - \epsilon$ for small $\epsilon$,  $\hat \bfx$ must be ``close'' to the top of $K.$
  }
  \label{fig:2Dgeo}
  \end{figure*}

From the discussion above, Line 2 can be implemented via Theorem \ref{thm:ellipsoid} using $O(b + \log (1/\epsilon_1)) = O(b)$ oracle calls.  Since each oracle call requires $O(n^5)$ time, Line 2 requires $O(n^5 b)$ total time.   After Line 2 completes, some point $\bar \bfx$ ``close'' to $\bfx^*$ is known. 

Line 5  can be implemented in $O(n^5)$ time using the DPs. Similar  to the previous section,  Lemma \ref{lem:spacing}  implies that each of  $E_0(x)$ and $E_1(x)$ have at most one critical point in $[l,r]$ and thus  the correctness of lines 8 and 9.  More specifically, each of $E_0(x)$ and $E_1(x)$ comprise only one or two lines in $[l,r]$ and these can be found from
$T_0(l)$, $T_1(l)$,    $T_0(r)$ $T_1(r)$ in $O(1)$ time.  

After completing Lines 5-9, Line 11 is taking the lower envelope of (at most) four lines, so it only requires $O(1)$ time.

Line 12 can then be evaluated in $O(1)$ time.

  Implementing Line 13 takes another $O(n^5)$ time.

Thus the entire algorithm uses $O(n^5 b)$ time.

It remains to prove the correctness of Line 13,  i.e., that $T_0(x'),\, T_1(x')$ are an optimal Binary AIFV-$2$ code.

From Lemma \ref{lem:22cases} it suffices to prove that  there exists $x' \in [l,r]$ with $M(x') = M(x^*).$  The possible situations that can occur are illustrated in \Cref{fig:4cases2D}.  The proof is below.

Recall that the slope of $f_{T_0(z)}$ is  $q_1(T_0(z))$ and the slope of
$g_{T_1(z)}$ is  $-q_0(T_1(z))$. 
The main tool is the observation that because all of the $p_i$ are multiples of $2^{-b},$ these slopes, if not $0,$ 
have absolute value at least $2^{-b}.$

We now use the following basic  geometric lemma:

\begin{lem}
\label{lem:2Dgeo} See \Cref{fig:2Dgeo}.
Let $M(x)$ be  a piecewise linear concave  function such that $z = \max_{x \in \Re}M(x)$ is bounded and let 
$$K = \{(x,y) \,:\,  y \le M(x)\}.$$
Further suppose that 
 there exists $\alpha > 0$ such that,  if  $y = mx +b$ is a supporting line of $K$ with $m \not = 0$,  then  $|m| \ge  \alpha$.
 
 Let $\bar \bfx = (\bar x_1, \bar x_2) \in K.$  Then, if $\bar  x_2 > z-\epsilon,$ there exists $x'$ such that $|\bar x_1 - x'| \le \alpha^{-1} \epsilon$ and  $M(x') = z.$
\end{lem}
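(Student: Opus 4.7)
The plan is to begin by identifying the ``top'' of $K$, namely the set $S = \{x : M(x) = z\}$. By the concavity of $M$, $S$ is a closed interval $[a,b]$ (possibly a single point when $a=b$). If $\bar x_1 \in [a,b]$, then the choice $x' = \bar x_1$ satisfies $M(x') = z$ and $|\bar x_1 - x'| = 0$, and we are done. Otherwise, a left-right symmetry lets me assume $\bar x_1 < a$, and I will argue that $x' = a$ works.

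The key step is to exhibit a supporting line of $K$ at the point $(a, z)$ whose slope is at least $\alpha$. Since $a$ is the \emph{leftmost} maximizer of $M$, we have $M(x) < z$ for every $x < a$. Hence the linear piece of $M$ immediately to the left of $a$ has strictly positive slope $m_-$ (a zero slope there would force $M \equiv z$ on that entire piece, contradicting the leftmost-ness of $a$). Extending this piece to a full line $\ell(x) = z + m_-(x - a)$ yields, by the concavity of $M$, the inequality $\ell(x) \ge M(x)$ for all $x \in \mathbb{R}$; equivalently, $\ell$ is a supporting line of $K$ from above. Because $m_- \ne 0$, the lemma's hypothesis forces $m_- \ge \alpha$.

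Applying this supporting line at $\bar \bfx \in K$ gives $\bar x_2 \le \ell(\bar x_1) = z + m_-(\bar x_1 - a)$. Combined with $\bar x_2 > z - \epsilon$ and $m_- \ge \alpha$, this rearranges to
\begin{equation*}
a - \bar x_1 \;\le\; \frac{z - \bar x_2}{m_-} \;<\; \frac{\epsilon}{\alpha},
\end{equation*}
so $x' = a$ satisfies $M(x') = z$ and $|\bar x_1 - x'| < \alpha^{-1} \epsilon$. The case $\bar x_1 > b$ is handled identically, using the right derivative at $b$ in place of the left derivative at $a$.

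The main subtlety in the argument is verifying that the supporting line constructed at $(a, z)$ really has slope at least $\alpha$, rather than some smaller positive value. Piecewise linearity is used crucially here: instead of appealing to an arbitrary element of the subdifferential of $M$ at $a$, one exhibits the slope of a genuine linear piece of $M$ which, when extended to a full line, becomes a bona fide supporting line of $K$, and so falls directly under the lemma's hypothesis. Once this slope bound is in place, the remainder of the proof is a one-line calculation.
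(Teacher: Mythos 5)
Your proof is correct and follows essentially the same route as the paper: bound the horizontal distance from $\bar x_1$ to the nearest maximizer of $M$ by using the hypothesis that nonzero slopes of supporting lines have absolute value at least $\alpha$, which yields exactly the paper's inequality $|\bar x_1 - x'| \le \alpha^{-1}(z-\bar x_2) \le \alpha^{-1}\epsilon$. The only differences are cosmetic — you unify the paper's case split (unique top point versus horizontal segment) by working with the interval $S=[a,b]$ and a symmetry argument, and you justify the slope bound slightly more explicitly by extending the linear piece adjacent to $a$ into a genuine supporting line.
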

\begin{proof}
Let $F = \{(x_1,x_2)  \in K\,:\,  x_2= z.\}$.  Then there are two possible cases.  Either (a) $F$ is one point or (b) $F$ is a horizontal line segment.  

\begin{itemize}
\item[(a)]  $F = \{\bfx^*\}$ for some unique point $\bfx^* = (x^*,z).$

If $\bar x_1 \ge x^*$ then,  by the convexity of $K$ and  the fact that the slope of $M(x)$ everywhere in $(x^*,\infty]$ is at most  $-\alpha$ yields
\begin{eqnarray}
\bar x_2 \le M(\bar x_1)
&\le &   M(x^*) - \alpha  (\bar x_1 - x^*) \nonumber\\
&=& z - \alpha  (\bar x_1 - x^*). \label{eq:right*}
\end{eqnarray}
Thus
$$
 \bar x_1 - x^* \le  \alpha^{-1} \left(z- \bar x_2\right) \le  \alpha^{-1} \epsilon.
$$

If $\bar x_1 \le x^*$ then, again by the convexity of $K$ and  the fact that the slope of $M(x)$ everywhere in $[-\infty,x^*_1]$ is at least $\alpha$ yields
\begin{eqnarray}
\bar x_2 \le M(\bar x_1)
&\le &   M(x^*) - \alpha  (x^*-\bar x_1 ) \nonumber\\
&=& z - \alpha  (x^*-\bar x_1). \label{eq:left*}
\end{eqnarray}
Thus
$$
x^* - \bar x_1  \le  \alpha^{-1} \left(z- \bar x_2\right) \le  \alpha^{-1} \epsilon.
$$

Combining the two directions proves that $|\bar x_1 - x^*| \le \alpha^{-1} \epsilon$, proving the lemma.

\item[(b)]  $F$ is a horizontal line segment.

Let  $\bfy^*=(y^*,z)$ be the leftmost   point in $F$ and $\bfx^*=(x^*,z)$ the rightmost point in $F$. Then 
$$F= \{(x_1,x_2)\,:\,  y^* \le x_1 \le x^* \ \mbox{and} \ x_2=z\}.$$

There are then three cases
\begin{itemize}
\item If $\bar x_1 \ge x^*$ then  Equation \ref{eq:right*} can still be applied, so  $\bar x_1 - x^* \le  \alpha^{-1} \epsilon$ and the lemma is correct with $x' = x_1^*.$

\item If $ y^* \le \bar x_1 \le x^*_1$ then  the  lemma is trivially correct with $x' = \bar x_1$

\item  If $ \bar x_1  < y^*$ then, after replacing  $x^*$ with $y^*$,  Equation \ref{eq:left*} can still be applied.  
Thus  $y^* - \bar x_1 \le  \alpha^{-1} \epsilon$ and the lemma is correct with $x' = y^*.$
\end{itemize}

\end{itemize}

\end{proof}

Plugging into the lemma with  $\alpha = 2^{-b}$ and $\epsilon = \epsilon_1$ says that there exists some $x'$ satisfying $|x' - \bar x_1| \le 2^b \epsilon_1= \epsilon_0/2$ such that
$M(x') = M(x^*)$.  Thus  such an $x' \in [l,r]$ will be found by \Cref {alg: ellipse2}, with correctness following from  Lemma \ref{lem:22cases}.  

\section{Conclusion and Future Directions}
\label{sec:Conc}
 
 This paper gave the first polynomial time algorithms for constructing optimal AIFV-$2$ codes.  The two algorithms presented are only   {\em weakly}-polynomial though, in that they depend upon the number of {\em bits} required to represent  the input.
An obvious open question would be to find an algorithm which, like Huffman coding,  is  strongly polynomial, i.e., only dependent  upon $n$, the number of {\em items} to be encoded.

 Another direction for new research would be to develop polynomial time algorithms for  constructing optimal  Binary AIFV-$m$ codes. 
  These are \cite{introduceMAIFV}
generalizations of  Binary AIFV-$2$ codes that code using  $m$-tuples $T_0,T_1,\ldots,T_{m-1}$ of coding trees.  
They have essentially the same structure as in the $m=2$ case but generalize the rule on how to flip between $m$ different coding trees.  
In addition,   instead of differentiating between
just master and slave nodes, they classify 
  nodes as $k$ different types of master nodes.

 \cite{iterativealgomaifv,iterativeisoptimal} prove that an iterative algorithm similar to \Cref{alg: alg1}  for $m=2$   will also successfully construct an optimal  Binary AIFV-$m$  code for $m >2.$
  Similar to the $m=2$ case, their algorithm  can be shown to be equivalent to associating each of the coding trees in the current $m$-tuple with a hyperplane in $m$-space, finding the intersection point of those hyperplanes, projecting that down to  
 a $(m-1)$-dimensional point $C$ and then constructing a new $m$-tuple of trees as a function of the $C$ (this is a generalization of Lines 4 and 5 in \Cref{alg: alg1}). Again, similar to the $m=2$ case,
  \cite{iterativealgomaifv,iterativeisoptimal} prove that this generalized algorithm converges in a finite number of steps to an optimal Binary AIFV-$m$ code but with no bound on the actual number of such steps.

It is unclear how the simple algorithm of  \Cref{subsec:bsearch} could be extended to $m >2.$
  Preliminary investigations show that  it is very likely that the Ellipsoid Method approach described in \Cref {subsec:ellipsoid} for $m=2$ would work for $m >2$ as well.  
More specifically, in the  $m >2$ case,  the polygon $K$ from Lemma \ref{lem:2Dgeo}  becomes  a {\em polytope} and the problem reduces to finding the {\em highest vertex}  in $K.$  Again,  the Gr{\"o}tschel,  Lov{\'a}sz and Schrijver ellipsoid algorithm will permit finding  a point $\bar \bfx$  {\em close} to the top of $K$ in polynomial time.
If the highest point in $K$ is a unique vertex $\bfx^*$, an analysis similar to the one for $m=2$ will permit using $\bar \bfx$ to find $\bfx^*$ which would then yield an optimal $m$-tuple of coding trees.
This is the non-degenerate case in which  all $m$ trees are actively used for coding messages.

  The complications arise when the top of $K$ is not a unique point but instead a higher-dimensional face.  These cases  would correspond to  degenerate scenarios in which not all $m$ coding  trees are used by the Binary AIFV-$m$ code\footnote{This is a more complicated generalization of the degenerate case when $m=2$ in which Huffman coding is the optimal AIFV-$2$ code, i.e., only $T_0$ is used and $T_1$ is never accessed. More specifically, it is  a generalization of case (b) in the proof of Lemma \ref{lem:22cases}. }.  To prove that knowing $\bar \bfx$ permits reconstructing  a vertex $\bfx^*$ of $K$ (and thus an optimal AIFV-$m$ code)  would require a more nuanced and detailed understanding of the geometry of $K.$

\bibliography{biblio.bib}
\bibliographystyle{ieeetr}


\end{document}